\newcommand\nth{\textsuperscript{th}\xspace}
\theoremstyle{definition}
\DeclarePairedDelimiterX\setc[2]{\{}{\}}{\,#1 \;\delimsize\vert\; #2\,}
\def\VR{\kern-\arraycolsep\strut\vrule &\kern-\arraycolsep}
\def\vr{\kern-\arraycolsep & \kern-\arraycolsep}
\newcommand\inlm[1]{\text{\mintinline{fortran}{#1}}}
\newcommand\inlms[1]{\text{\mintinline[fontsize=\footnotesize]{fortran}{#1}}}
\newcommand\inl{\mintinline{fortran}}
\newcommand\inlFS{\mintinline{fsharp}}
\newcommand\prn[1]{\left({#1}\right)}
\newcommand\abs[1]{\left|{#1}\right|}
\newcommand\len[1]{\left|{#1}\right|}
\newcommand\set[1]{\left\{{#1}\right\}}
\newcommand\punitA{{\tt 'a}}
\newcommand\punitB{{\tt 'b}}
\newcommand\fsharp{F\#}
\newcommand\sub[2]{\ensuremath{{{#1}}\!\prn{#2}}}
\newcommand\coneq{\sim}
\newcommand\UName{U}
\newcommand\Uvar[1]{\UName\!\left[{#1}\right]}
\newcommand\Ulit[1]{\UName\!\left[{#1}\right]}
\newcommand\Uabs[2]{\Uuse{#1}{#2}{\ast}}
\newcommand\UEAPabs[1]{\UName\!\left[{#1},{\ast}\right]}
\newcommand\Uuse[3]{\UName\!\left[{#1},{#2},{#3}\right]}
\newcommand\UEAPuse[2]{\UName\!\left[{#1},{#2}\right]}
\newcommand\rulenamed[1]{\ensuremath{{#1}}}
\newcommand\erule[3]{\infer[{\mathit{#1}}_{\mathit{ex}}]{#2}{#3}}
\newcommand\erulenamed[1]{\rulenamed{{\mathit{#1}}_{\mathit{ex}}}}
\newcommand\strule[3]{\infer[{\mathit{#1}}_{\mathit{st}}]{#2}{#3}}
\newcommand\strulenamed[1]{\rulenamed{{\mathit{#1}}_{\mathit{st}}}}
\newcommand\purule[3]{\infer[{\mathit{#1}}_{\mathit{pu}}]{#2}{#3}}
\newcommand\blrule[3]{\infer[{\mathit{#1}}_{\mathit{bl}}]{#2}{#3}}
\newcommand\blrulenamed[1]{\rulenamed{{\mathit{#1}}_{\mathit{bl}}}}
\newcommand\cstep[5]{{{#1};{#2}}\vdash{#3}\leadsto{{#4};{#5}}}
\newcommand\ststep[4]{{{#1};{#2}}\vdash{#3}\leadsto{{#4}}}
\newcommand\pustep[4]{{{#1};{#2}}\vdash{#3}\leadsto{{#4}}}
\newcommand\modstep[3]{{{#1}}\vdash{#2}\leadsto{{#3}}}
\newcommand\ujudge[4]{{#1}\vdash{#2}:{#3};{#4}}
\newcommand\polycontext[1]{\text{polycontext}\!\prn{#1}}
\newcommand\freshcallid[1]{\text{freshid}\!\prn{#1}}
\newcommand\freshlitid[1]{\text{freshid}\!\prn{#1}}
\newcommand\isfunction[1]{\text{isfunction}\!\prn{#1}}
\newcommand\isarray[1]{\text{isarray}\!\prn{#1}}
\newcommand\isMonomorphic[1]{\text{isMonomorphic}\!\prn{#1}}
\newcommand\instantiate[2]{\text{instantiate}\!\prn{{#1},{#2}}}
\newcommand\unitless{{\mathbf{1}}}
\newcommand\external{{\mathsf{external}}}
\newcommand\Array{{\mathsf{array}}}
\newcommand\conv[1]{\left\llbracket{#1}\right\rrbracket}
\newcommand\lhs{{\mathsf C}}
\newcommand\rhs{{\mathsf B}}
\newcommand\unknownv[1]{\lhs\!\left[{#1}\right]}
\newcommand\baseunit[1]{\rhs\!\left[{#1}\right]}
\definecolor{darkgrey}{rgb}{0.5,0.5,0.5}
\definecolor{darkgreen}{rgb}{0.0,0.5,0.0}
\definecolor{darkpurple}{rgb}{0.6,0.0,0.6}
\definecolor{orange}{rgb}{0.8,0.4,0.0}
\begin{document}

\title{Incremental units-of-measure verification} %
\author{Matthew Danish}
\affiliation{
  \position{Research Associate}
  \department{Department of Computer Science and Technology}      %
  \institution{University of Cambridge} %
  \streetaddress{15 JJ Thomson Ave}
  \city{Cambridge}
  \state{Cambridgeshire}
  \postcode{CB3 0FD}
  \country{United Kingdom}                    %
}
\email{mrd45@cam.ac.uk}          %

\author{Dominic Orchard}
\affiliation{
  \position{Lecturer}
  \department{School of Computing}      %
  \institution{University of Kent} %
  \streetaddress{}
  \city{Canterbury}
  \state{Kent}
  \postcode{CT2 7NF}
  \country{United Kingdom}                    %
}
\email{D.A.Orchard@kent.ac.uk}          %

\author{Andrew Rice}
\affiliation{
  \position{Reader}
  \department{Department of Computer Science and Technology}      %
  \institution{University of Cambridge} %
  \streetaddress{15 JJ Thomson Ave}
  \city{Cambridge}
  \state{Cambridgeshire}
  \postcode{CB3 0FD}
  \country{United Kingdom}                    %
}
\email{acr31@cam.ac.uk}          %

\begin{abstract}
Despite an abundance of proposed systems, the verification of
units-of-measure within programs remains rare in scientific
computing. We attempt to address this issue by providing a lightweight
static verification system for units-of-measure in Fortran programs which supports incremental
annotation of large projects. We take the opposite approach to the most mainstream
existing deployment of units-of-measure typing (in \fsharp) and generate a
global, rather than local, constraints system for a program. We show that such a system can infer (and check)
polymorphic units specifications for under-determined parts of the
program. Not only does this ability allow checking of partially annotated
programs but it also allows the global constraint
problem to be partitioned. This partitioning means we can scale to large programs
by solving constraints for each program module independently and
storing inferred units at module boundaries (\emph{separate verification}). We provide an
implementation of our approach as an extension to an open-source
Fortran analysis tool.
\end{abstract}

\begin{CCSXML}
<ccs2012>
<concept>
<concept_id>10003752.10010124.10010138.10010140</concept_id>
<concept_desc>Theory of computation~Program specifications</concept_desc>
<concept_significance>500</concept_significance>
</concept>
<concept>
<concept_id>10011007.10011006.10011073</concept_id>
<concept_desc>Software and its engineering~Software maintenance tools</concept_desc>
<concept_significance>500</concept_significance>
</concept>
<concept>
<concept_id>10010405.10010432.10010441</concept_id>
<concept_desc>Applied computing~Physics</concept_desc>
<concept_significance>300</concept_significance>
</concept>
</ccs2012>
\end{CCSXML}

\ccsdesc[500]{Theory of computation~Program specifications}
\ccsdesc[500]{Software and its engineering~Software maintenance tools}
\ccsdesc[300]{Applied computing~Physics}

\keywords{units-of-measure, lightweight verification}  %

\maketitle

\section{Introduction}

Scientific computing and computational models are playing an
increasingly important role in research, industry, and policy. However, such programs
are just as prone to error as other kinds of software and
their reliance on complex numerical routines makes testing difficult.
Very simple errors, such as a flipped minus sign, have had significant
impact, for example, leading to retractions from premier journals~\citep{Merali:2010}.
Lightweight verification techniques have the potential to help in this
situation~\citep{ORCHARD2014713} by providing scientists with tools
that have a low specification burden and integrate with existing
practices. One example of this is dimension typing in which the consistency of
dimensionality or units-of-measure in equations is statically verified.
Units-of-measure mistakes have led to high-stakes disasters, such as
the \$327 million Mars Climate Orbiter~\citep{MarsClimateOrbiter}
disappearing during orbital insertion. An investigation later pinned
the blame on a mix-up of units-of-measure, with ground-based software
sending numbers in Imperial units instead of metric units, dropping the
probe into the atmosphere of Mars~\citep{Stephenson:1999:MCO}. This is
not the only time NASA projects have struggled to convert from Imperial
to metric units. Such errors are implicated in at least one other
crash~\citep{Marks:2009}.

Checking the consistency of units-of-measure in equations, or at least
the physical dimensions of calculations, is a method long employed by
scientists working with pen and paper~\citep{Macagno:1971}. Informal
discussion with scientists and examination of scientific program code
has shown that units-of-measure reasoning in code does take place but
usually only by hand~\citep{Orchard:2015:JCS}.
Formal units checking has not been adopted in scientific
programming despite the familiarity of the technique and a multitude
of proposed systems for automated
verification~\citep{Ore:2017:ISSTA,Jiang:2006:ICSE,Kennedy:2009:CEFP,Snyder:2013:N1969}. One
potential explanation for this is the burden of applying units
annotations to existing codebases, which may be a large burden if many
annotations are needed (some approaches even require that every
variable declaration is annotated).

In this paper, we describe a fresh approach to statically analysing and verifying
units-of-measure which eases this burden by supporting incremental
annotation, rather than requiring a whole program change.  Units
annotations (specifications) are orthogonal to the type system and
placed in comments. Thus, we integrate smoothly with existing tools
and compilers, without modifying the language or
relying on the existing type system. We support any
system of units that the user would like to use, not only absolute
units. We also allow user-defined convenient aliases for units.

The most mainstream units-of-measure checking system is part of the programming
language \fsharp\ using an extension to its type
system~\citep{Kennedy:2009:CEFP} that uses local reasoning based on
Kennedy's extension to ML~\citep{Kennedy:1996} to support unification
over the equational theory of Abelian groups. Our principal
contribution is an algorithm that takes a different approach to the same
problem, by generating and solving global constraints. We describe the
core algorithm and its implications in Section~\ref{sec:algorithm}.
We show how our global constraint system can infer polymorphic units
for underdetermined systems (Section~\ref{sec:backendpoly}) and that
this supports both genuine polymorphic functions
(Section~\ref{sec:polyinst}) and also partial annotation of programs
(Section~\ref{sec:incremental}). We also show that dividing a large
program into modules and recording inferred polymorphic units at the
boundaries partitions the global constraint problem and allows our
approach to scale to large programs (Section~\ref{sec:fsmod}). We give
a quantitative analysis of this scaling in Section~\ref{sec:scale}.

Our approach is based on a novel modification of standard techniques
for reducing a matrix of constraints to Hermite Normal
Form. The modified algorithm allows us to choose a single
solution for a set of polymorphic unit constraints that would
otherwise form an underconstrained system if simply viewed as a set of
linear equations. We formalise our algorithm, prove
termination and prove that it produces integer solutions
(Section~\ref{sec:modhnf}).

We implemented our approach as an extension of CamFort, an open-source Fortran analysis and verification
tool~\citep{CamFort}. However, our work is
applicable beyond Fortran. We highlight in Section~\ref{sec:algorithm} any
Fortran-specific parts and how our approach can
be more generally applied.
Section~\ref{sec:studies} details practical experiences applying our
tool to real-world scientific computing models; we
discuss several examples of bugs and undocumented unit conversions
found by our tool.
Section~\ref{sec:related} compares our approach against the
literature, including the work of
\citet{Kennedy:2009:CEFP,Kennedy:1996,kennedy1994dimension},
\citet{Jiang:2006:ICSE},
\citet{Hills:2008:RULE,Chen:2003:RTA,Rosu:2003:ASE},
\citet{Orchard:2015:JCS,Contrastin:2016},
\citet{Ore:2017:ISSTA},
\citet{Squants} and others.

\section{Overview}
\label{sec:overview}

\subsection{Annotating and Checking Units-of-Measure}\label{sec:suggest}

A physical dimension is a measured quantity such as length, mass, time
or electric charge that can be sensibly compared among its own
kind. Other dimensions are derived, such as velocity
measured as a length over time. Dimensional analysis is a standard
technique for checking calculations and understanding physical
equations: by tracing the steps of computation with labelled
dimensions one can check if the resulting dimension is as
intended. \emph{Units-of-measure} are the conventional names adopted for
specific amounts of each dimension. For example, the length dimension
can be expressed in terms of metres, inches, feet,
Smoots~\citep{Smoot}, among many others.

\newcommand\nonterm[1]{{\emph{#1}}}
\newcommand\term[1]{{\text{\sf #1}}}
\newcommand\bndef{::=}
\newcommand\synalt{\phantom{\bndef}\mid}

\begin{listing}[b]
\vspace{-0.35em}
\begin{minipage}{0.45\linewidth}
  \inputminted[linenos]{fortran}{ballistics1.f90}
\end{minipage}
\hspace{0.5em}
\begin{minipage}{0.45\linewidth}
\begin{Verbatim}
$ camfort units-suggest ballistics.f90 

ballistics.f90: 3 variable declarations
   suggested to be given a specification:
    (5:22)    a
    (4:22)    v0
    (6:11)    x
\end{Verbatim}
\vspace{1em}
\end{minipage}
\vspace{-0.55em}
\caption{Transcription of ballistics equation: un-annotated (left)
  with suggestion from our tool (right).}
\label{lst:ball1}
\end{listing}

As a simple example, consider the elementary equation of ballistics:
$x(t)=\frac 1 2 at^2+v_0t+x_0$ transcribed into Fortran in
Listing~\ref{lst:ball1} (left).  We could annotate every variable using our
comment-based syntax, but to save work we first ask our system to
suggest a small set of variables that is sufficient to determine the
others, by the command %
\texttt{camfort units-suggest} (Listing~\ref{lst:ball1}, right).

In this case, our tool suggests annotating \inl{a}, \inl{v0} and \inl{x}.
We annotate those variables as shown in Listing~\ref{lst:infer}
(left) using our comment-based unit syntax (define in
Figure~\ref{fig:syntax}). We can then invoke the inference engine
as shown in Listing~\ref{lst:infer} (right). The result is that every variable
has been successfully assigned a consistent units-of-measure.  We can
import these inferred units back into the program text using a related
command, {\tt units-synth}, which rewrites the original source code to a
specified file but with the inferred annotations automatically
inserted as comments, as a time-saving measure. Those annotations will
then serve as static specifications that can be checked any time in
the future, for reassurance, and to aid code maintenance and documentation.

\begin{figure}[b]
  \centering
  \begin{align*}
    \nonterm{annotation}&\bndef\ \nonterm{specification} \mid \nonterm{alias} &
    \nonterm{polyname} &\bndef\ \term{'} \nonterm{name} \\
    \nonterm{specification} &\bndef\ \nonterm{commentchar}~\term{=}~\term{unit}~\nonterm{u}~\term{::}~\nonterm{vars}&
    \nonterm{name} &\bndef\ \term{[a-zA-Z][a-zA-Z0-9\_-]*} \\
    \nonterm{alias}&\bndef\ \nonterm {commentchar}~\term{=}~\term{unit}~\term{::}~\nonterm{name}~\term{=}~\nonterm{u}&
    \nonterm{vars} &\bndef\ \nonterm{vars} , \nonterm{var} \mid \nonterm{var} \\
    \nonterm{commentchar} &\bndef\ \term{!} \prn{Fortran90} \mid \term c \prn{Fortran77} &
    \nonterm{var} &\bndef\ \term{[a-zA-Z][a-zA-Z0-9\_-]*}\\
    \nonterm{u} &\bndef\ \nonterm{$U_{\textit{name}}$} \mid\unitless \mid uu \mid u/u \mid u^z&
    \nonterm{z} &\in\mathbb{Z}\\
    \nonterm{$U_{\textit{name}}$} &\bndef\ \nonterm{polyname} \mid \nonterm{name} \\
  \end{align*}
  \vspace{-3em}
  \caption{Annotation syntax: units specification and aliases}
  \label{fig:syntax}
\end{figure}

\begin{listing}[t]
\begin{minipage}{0.45\linewidth}
  \inputminted[linenos]{fortran}{ballistics2.f90}
\end{minipage}
\hspace{1em}
\begin{minipage}{0.45\linewidth}
\begin{Verbatim}
$ camfort units-infer ballistics.f90
ballistics.f90:
  3:22 unit metre :: x0
  5:22 unit metre / sec :: v0
  7:22 unit metre / (sec**2) :: a
  9:11 unit metre :: x
  9:14 unit sec :: t
\end{Verbatim}
\vspace{3em}
\end{minipage}
\caption{\textit{Left:} transcription of the ballistics equation with suggested
variables \inl{a}, \inl{v0} and \inl{x}
annotated with units. \textit{Right}: output of units-infer mode at command-line applied
to the partially annotated program, showing line
  number and column of inferred variables, e.g., at line 3 column 22,
\inl{x0} is inferred to have unit \inl{metre}.}
\label{lst:infer}
\vspace{-0.5em}
\end{listing} %

\subsection{Polymorphic Units Annotation}

\begin{listing}[t]
\hspace{4em}
\begin{minipage}{0.4\linewidth}
  \inputminted[linenos]{fortran}{double.f90}
  \vspace{1em}
\end{minipage}
\hspace{1em}
\begin{minipage}{0.4\linewidth}
  \inputminted[linenos]{fortran}{square.f90}
\end{minipage}
  \caption{Annotated examples of polymorphism.}
  \label{lst:double}\label{lst:square}
\end{listing}

\noindent
Many functions (and subroutines) can correctly operate in a generic
way on any units they are given.  Listing~\ref{lst:double} gives two
simple examples of units-polymorphic functions. Function~\inl{d(n)}
simply doubles the value of its parameter and
function~\inl{square(n)} squares the value of its parameter. All
of the units are annotated in this example. Variables \inl{x} and
\inl{t} are given concrete units \inl{metre} and \inl{sec}. But in the
function \inl{d}, the parameter \inl{n} is annotated with units named
{\tt 'a}. This is a \emph{polymorphic} units variable that can
be %
substituted with any concrete units as needed. We follow the ML
tradition, where polymorphic names start with a single-quote and their
names are pronounced by their Greek alphabet equivalent, e.g. $\alpha$
for {\tt 'a}. The same variable is used to annotate \inl{d} (its
result) indicating that for every call to \inl{d} the units of the
function's result are the units of any argument substituted for \inl{n}.
This can be seen in the program where both %
\inl{x = d(x)} and \inl{t = d(t)} are valid lines: one instantiating
\inl{d} for units \inl{metre} and the other for \inl{sec}.
Note that this implies that the literal \inl{2} is inferred to
be \emph{unitless} (a scaling factor).

A similar scheme of polymorphic unit annotations appears in
\inl{square} (Listing~\ref{lst:double}, right).
For the nested function \inl{sqr}, the parameter
\inl{n} is unit polymorphic \punitA{} and thus the unit of the result
is \punitA{} squared. Variables \inl{x}, \inl{y}, \inl{t} and \inl{s}
have concrete units which are coherent with instantiations of \inl{sqr}.

The importance of polymorphism as an effort-saving mechanism cannot be
underestimated: without polymorphism the programmer would be forced to
write redundant copies of the function \inl{d} and \inl{sqr} in these examples. More fundamentally,
polymorphism describes essential structure and properties of programs. For example,
\citet{Kennedy:1996} proves that functions without side-effects, such
as \inl{d}, that accept and return the same units can be fully
described in their behaviour by a single constant multiplier. In other
words, for every single-parameter function \inl{d} that accepts and
returns the same units {\tt 'a}, there must exist a number $k$ such that the
function $\lambda n. k\cdot n$ is extensionally equivalent to
\inl{d}. Similarly, for each function like \inl{square} that returns
{\tt 'a}\inl{**2}, there exists some $k$ such that the function is
extensionally equivalent to $\lambda n. k\cdot n^2$. This property
is known as \emph{dimensional invariance} or \emph{invariance under scaling}~\citep{atkey2015models}.

\subsection{Modular Fortran}
Real programs are usually much larger than these toy examples and
contain many modules. The Fortran 90 module system provides separation
of name-spaces, and allows a module to import names from another
module, possibly after renaming. Large Fortran programs are compiled
as separate units and then linked together to form executables. Many
different third-party libraries might be referenced by a given program, but
the source code for those libraries may not be available.

We extended the program analysis engine in CamFort to support multiple
modules and separate analysis of units. This mechanism may also be
used to provide annotations for closed-source libraries. An
illustrative example of inference across modules is given by a
modularised version of the ballistics example in
Listing~\ref{lst:ballmod}, comprising a
\inl{helper} module (left)
imported by the
\inl{use} keyword
in the \inl{ballistics} program (right),
bringing its variables and functions into the
current scope. The main calculation is now performed by the function
\inl{x(t)} which uses the polymorphic \inl{square}.

\begin{listing}[t]
\hspace{0.5em}\begin{minipage}{0.43\linewidth}
  \inputminted[linenos]{fortran}{helper.f90}
\end{minipage}
\begin{minipage}{0.47\linewidth}
  \inputminted[linenos]{fortran}{ballistics.f90}
\end{minipage}
\caption{Helper module (left) and ballistics program (right).}
\label{lst:ballmod}\label{lst:helpmod}
\end{listing}

The annotations for variables \inl{x0} and \inl{v0} are given in the
\inl{helper} module. We also show the units alias feature by giving
the name \inl{speed} to \inl{metre / sec}, which acts as a simple
substitution. The function \inl{square} has been
introduced with no unit annotations.

\noindent
Listing~\ref{lst:inferhelper} (left) shows the output of units
inference on the \inl{helper} module. The user-supplied annotations
(for \inl{x0} and \inl{v0}) are output along with two newly inferred
annotations for \inl{square} and its parameter \inl{n}. Since
\inl{square} does not specialise on any particular units, a
freshly-generated polymorphic units variable is inferred for its
parameter \inl{n}, which is squared in its return unit.

\begin{listing}[t]
\vspace{-0.5em}
\hspace{-3em}\begin{minipage}{0.45\linewidth}
\begin{Verbatim}
$ camfort units-infer helper.f90
helper.f90:
  5:22 unit metre :: x0
  5:30 unit metre / sec :: v0
  7:3 unit ('a)**2 :: square
  8:13 unit 'a :: n
\end{Verbatim}
\vspace{5em}
\end{minipage}
\begin{minipage}{0.45\linewidth}
\begin{Verbatim}
$ camfort units-compile helper.f90
Compiling units for 'helper.f90'

$ camfort units-infer ballistics.f90
helper.fsmod: parsed precompiled file.
ballistics.f90:
  5:11 unit sec :: t1
  5:21 unit sec :: t2
  6:11 unit metre :: xsum
  8:3 unit metre :: x
  9:13 unit sec :: t
\end{Verbatim}
\end{minipage}

\vspace{1em}
\caption{Output showing inferred polymorphism (left) and compilation
  of included modules (right).}

\label{lst:inferhelper}\label{lst:inferball}
\vspace{-1em}
\end{listing}

Listing~\ref{lst:inferball} (right) shows the output of two
commands. First, we run a new mode of operation named
\inl{units-compile} on the \inl{helper} module to produce a special
precompiled file named \inl{helper.fsmod}, which contains the
information necessary to import units information into other
modules. These precompiled files are detailed in
Section~\ref{sec:fsmod}. Then we run \inl{units-infer} on the
\inl{ballistics} module, which automatically detects and loads the precompiled
file for \inl{helper}. Although the \inl{ballistics}
module does not contain any annotations the inference is successful
and the output identifies five variables that can be assigned units:
\inl{t1}, \inl{t2}, \inl{xsum}, \inl{x} and \inl{t}. This
result brings together several aspects of our work:
separately-compiled module information (Section~\ref{sec:fsmod}),
interprocedural
knowledge (Section~\ref{sec:interprocedural}), %
intrinsic functions (Section~\ref{sec:intrinsics}),
and polymorphic units (Section~\ref{sec:backendpoly}). The solution
to the units constraints is then computed by a modified Hermite Normal
Form procedure (Section~\ref{sec:modhnf}).

\section{Inference algorithm}
\label{sec:algorithm}

Our core algorithm provides both inference of units in a program and
checking for those variables which have been given a units
specification. The algorithm proceeds by a syntax-directed generation
of constraints (Section~\ref{sec:constraintgen}), an intermediate
stage of polymorphic specialisation (Section~\ref{sec:polyinst})
followed by a solving procedure (Section~\ref{sec:modhnf}). We explain
here also further details related to polymorphic inference
(Section~\ref{sec:backendpoly}) and separate verification
(Section~\ref{sec:fsmod}). The constraints are attached to the
abstract syntax tree and can be traced back later for error-reporting
purposes.

Our implementation applies our algorithm
to Fortran code across versions 66, 77, 90, and 95.
However, since the syntax of Fortran is extensive~\citep{Fortran},
we explain the algorithm on a subset ---
a core imperative language. Figure~\ref{fig:fortransyntax} outlines
the syntax which comprises expressions, contained in statements,
contained within blocks, contained within program units (e.g., functions,
modules).
We use the following syntactic conventions:
\begin{itemize}[leftmargin=1.5em]
\item Variables are denoted by $x$; array subscripting by $\sub x{\cdots}$
\item Functions are denoted by $f$ (including built-in functions) and subroutines
  by $s$;
\item Module and program names are denoted by $m$;
\item Literals are denoted by $n$ or $0$ to which unique identifiers
are associated ranged over by $l$;
\item Unique `call identifiers' for function and subroutine calls are denoted by $i$;
\item The return and parameter variables to functions and subroutines are numbered $0,1,\ldots,k$;
\item Explicitly annotated parametric polymorphic units variables
are denoted by $\alpha$.
\end{itemize}
Additional notation is explained on-demand. All names (variables, function
names) are assumed unique following the scoping rules of the
underlying language.

In Figure~\ref{fig:fortransyntax}, the first rule in the non-terminal
node $B_{\mathit{spec}}$ captures our key
notion of units annotation via comments.
These annotations are intended to
be placed directly before the declarations of the variables they reference. Similarly, we
support annotation of function return units by placing an annotation
just before a function definition, as captured by non-terminal $pu_i$.
Our system also supports aliasing of complex monomorphic units
$u$ to a ${\textit{name}}$, implemented by simple substitution.

\begin{figure}[t]
  \centering
  \begin{align*} %
    e &\bndef\ -e \mid e \oplus e \mid e \oslash e \mid e \inlm{*}
        e\mid e \inlm{/} e \mid e \inlm{**} n \mid n \mid x \mid \sub f{\overline e} \mid \sub x {\overline e}\\[-0.3em]
    &\phantom{\bndef}\text{where}~\oplus\in\set{\inlm{+},\inlm{-}},~\oslash\in\set{\inlm{==},\inlm{/=},\inlm{<},\inlm{<=},\inlm{>},\inlm{>=}},
    \text{and $n$ ranges over numeric literals} \\
    st &\bndef\ x = e \mid x(\overline e) = e \mid \inlm{call}~\sub s{\overline e} \\[-0.3em]
    &\phantom{\bndef}\text{where $s$ includes built-in subroutines that omit the keyword \inl{call}} \\
    B_{\mathit{act}} &\bndef\ st \;\mid\; \inlm{if}~e~\inlm{then}~\overline B_{\mathit{act}}~\inlm{else}~\overline B_{\mathit{act}}~\inlm{end if}
      \;\mid\; \inlm{do while}~\prn{e}~\overline B_{\mathit{act}}~\inlm{end do}\\
    B_{\mathit{spec}} &\bndef\ \inlm{!= unit}~u~\inlm{::}~x \;\mid\;
      \inlm{real ::}~x \; \mid \, \inlm{integer ::}~x
      \; \mid \; \inlm{dimension ::}~\sub x{\overline n} \; \mid \, \inlm{external ::}~x\\
    B &\bndef\ B_{\mathit{spec}} \mid B_{\mathit{act}}\\
    cs &\bndef\ \inlm{contains}~\overline{pu_i}\\
    pu_i &\bndef\ \inlm{!= unit}~u~\inlm{::}~f  \\[-0.4em]
    &\synalt \inlm{!= unit}~\inlm{::}~name~\inlm{=}~u \\[-0.4em]
    &\synalt \inlm{function}~\sub f{\overline x}~\left[\sub{\inlm{result}}{x}\right]~\overline B_{\mathit{spec}}, \overline B_{\mathit{act}}~\left[cs\right]
    \;\mid\; \inlm{subroutine}~\sub s{\overline x}~\overline B_{\mathit{spec}}, \overline B_{\mathit{act}}~\left[cs\right] \\
    pu &\bndef pu_i \;\mid\; \inlm{module}~m~\overline B_{\mathit{spec}}~\left[cs\right]
      \;\mid\; \inlm{program}~m~\overline B_{\mathit{spec}}, \overline
        B_{\mathit{act}}~\left[cs\right]
  \end{align*}
  \vspace{-1.5em}
  \caption{Simplified Fortran syntax}
  \label{fig:fortransyntax}
  \vspace{-0.5em}
\end{figure} 
\subsection{Units Constraints}
\label{sec:constraintgen}

\newcommand{\uvar}{\ensuremath{\mathit{uvar}}}
\begin{definition}[Units syntax and representation]
  The internal representation of units types (i.e., not the surface
  syntax of Figure~\ref{fig:syntax}) is ranged over by $u$:
\begin{align*}
u & \; \bndef \; U \,\mid\, \unitless \,\mid\, u_1 \, u_2 \,\mid\, u^{z} \,\mid\,
    \uvar
\end{align*}
where $z \in \mathbb{Z}$ and
$U$ is any concrete unit, e.g., {\tt metre}, {\tt sec}
and $\unitless$ is the `unitless' type. In our
implementation, the concrete syntax of units variables \uvar{} is in the ML style of
type variables: e.g., {\tt 'a}. Internally, \uvar{} splits
into a more fine-grained definition:
\begin{align*}
  \mathit{uvar} & \; \bndef \; \Uvar {\mathit{lx}} \,\mid\, \Uabs {\mathit{fs}} {\mathit{klx}} \,\mid\, \Uuse {\mathit{fs}} {\mathit{klx}} i \,\mid\, \UEAPabs{\alpha} \,\mid\, \UEAPuse{\alpha}{i}
\end{align*}
where
 $\mathit{lx}$ ranges over unique identifiers for literals $l$ or variables $x$;
 $\mathit{fs}$ ranges over function and subroutine names;
 $\mathit{klx}$ ranges over $\mathit{lx}$ or function parameters $k$;
 and $\alpha$ ranges over explicit polymorphic-units variables.
 Therefore, $\Uvar {\mathit{lx}}$ is a units variable for literals or variables.
 $\Uabs {\mathit{fs}} {\mathit{klx}}$ is an abstract (universally quantified) units variable for $\mathit{klx}$ inside function/subroutine $\mathit{fs}$,
 $\UEAPabs {\alpha}$ is an abstract units variable for polymorphic units-variable $\alpha$ coming from an explicit signature,
 $\Uuse {\mathit{fs}} {\mathit{klx}} i$ is a units variable for $\mathit{klx}$ in function or subroutine $\mathit{fs}$ instantiated at unique call site $i$, and
 $\UEAPuse {\alpha} i$ is a units variable for $\alpha$ instantiated at unique call site $i$.
\end{definition}

\begin{figure*}[b]\footnotesize
\begin{minipage}{\textwidth}
\centering
\begin{equation*}
\begin{gathered}
  \erule{polyvar}{\ujudge{\Gamma}{x}{\Uabs f x}{\cdot}}{\polycontext{f,\Gamma}&x\in\Gamma}
  \qquad\erule{var}{\ujudge{\Gamma}{x}{\Uvar x}{\cdot}}{\neg\polycontext{f,\Gamma}&x\in\Gamma}
  \qquad\erule{\oplus}{
    \ujudge{\Gamma}{e_1\oplus e_2}{u_1}{\Delta_1,\Delta_2,u_1\coneq u_2}
  }{
    \ujudge{\Gamma}{e_1}{u_1}{\Delta_1} & \ujudge{\Gamma}{e_2}{u_2}{\Delta_2}
  }
  \\[0.3em]
  \erule{\oslash}{
    \ujudge{\Gamma}{e_1\oslash e_2}{\unitless}{\Delta_1,\Delta_2,u_1\coneq u_2}
  }{
    \ujudge{\Gamma}{e_1}{u_1}{\Delta_1} & \ujudge{\Gamma}{e_2}{u_2}{\Delta_2}
  }
  \qquad\erule{mul}{\ujudge{\Gamma}{e_1\inlms{*}e_2}{u_1u_2}{\Delta_1,\Delta_2}}{\ujudge{\Gamma}{e_1}{u_1}{\Delta_1} & \ujudge{\Gamma}{e_2}{u_2}{\Delta_2}}
  \qquad\erule{div}{\ujudge{\Gamma}{e_1\inlms{/}e_2}{u_1u_2^{-1}}{\Delta_1,\Delta_2}}{\ujudge{\Gamma}{e_1}{u_1}{\Delta_1} & \ujudge{\Gamma}{e_2}{u_2}{\Delta_2}}
  \\[0.3em]
  \erule{unary}{\ujudge{\Gamma}{{- e}}u{\Delta}}{\ujudge{\Gamma}{{e}}u{\Delta}}
  \qquad\erule{power}{
    \ujudge{\Gamma}{e\inlms{**}n}{u^n}{\Delta}
  }{
    \ujudge{\Gamma}{e}{u}{\Delta} \quad n \in \mathbb{Z}
  }
  \qquad\erule{subscript}{
    \ujudge{\Gamma}{\sub x{\overline e}}u{\Delta}
  }{
    \ujudge{\Gamma} x u {\cdot} & \isarray{x,\Gamma} & \ujudge{\Gamma}{\overline e}{\overline {\unitless}}{\Delta}
  }
  \\[0.3em]
  \erule{fcall}{
    \ujudge{\Gamma}{\sub f {\overline e}}{\Uuse f 0 i}{
      \Delta,u_1\coneq\Uuse f 1 i,\ldots,u_k\coneq\Uuse f k i
    }
  }{
    f:\external\not\in\Gamma & \freshcallid i & \isfunction{f,\Gamma} &
    \ujudge{\Gamma}{\overline e}{\overline u}{\Delta} & \text{where}~\overline u=u_1,\ldots,u_k
  }
  \\[0.3em]
  \erule{fcallext}{
    \ujudge{\Gamma}{\sub f{\overline e}}{\Uuse f 0 0}{
      \Delta,u_1\coneq\Uuse f 1 0,\ldots,u_k\coneq\Uuse f k 0
    }
  }{
    f:\external\in\Gamma & \isfunction{f,\Gamma} &
    \ujudge{\Gamma}{\overline e}{\overline u}{\Delta} & \text{where}~\overline u=u_1,\ldots,u_k
  }
  \\[0.3em]
  \erule{list}{
    \ujudge{\Gamma}{\overline e}{\overline u}{\Delta}\qquad\text{where}~\overline e=e_1,e_2,\ldots,e_k~\text{and}~\overline u=u_1,u_2,\ldots,u_k
  }{
    \ujudge{\Gamma}{e_1}{u_1}{\Delta_1} & \ujudge{\Gamma}{e_2}{u_2}{\Delta_2} & \cdots & \ujudge{\Gamma}{e_k}{u_k}{\Delta_k} & \Delta=\Delta_1,\Delta_2,\ldots,\Delta_k
  }
\end{gathered}
\end{equation*}
\end{minipage}
\vspace{-0.5em}
  \caption{Rules for expressions (apart from literals, see Figure~\ref{fig:literalrules})}
  \label{fig:expressionrules}
\end{figure*}

\subsubsection{Expressions}\label{sec:expressions}

Our inference algorithm assigns units to expressions in
a compositional manner, generating a set of units equality
constraints. Figure~\ref{fig:expressionrules} gives the rules for
expressions defined by judgments of the form
$\ujudge{\Gamma}{e}{u}{\Delta}$
 meaning $e$ has units $u$ in context $\Gamma$ and generates a set
of units equality constraints $\Delta$, whose elements are written
$u_1 \coneq u_2$. Variable environments $\Gamma$ are ordered sequences
mapping variables to type information (\inlm{real}, \inlm{integer},
$\Array$), their unit, and an optional attribute
$\external$ to denote any externally defined (thus abstract)
functions and variables.

The units of variables and literals depend on
whether they are at the top-level or inside a function or subroutine, where they may be
polymorphic. Such contexts are classified by a predicate:
\begin{definition}\label{def:polycontext}
  Given a function or subroutine name $\mathit{fs}$, then $\Gamma$ is a \emph{polymorphic context}
  for $\mathit{fs}$ if:
  \begin{align*}
    \polycontext{\mathit{fs},\Gamma} := \quad  \exists \Gamma_1, \Gamma_2, x \, . \; \Gamma &\equiv (\Gamma_1, x:\Uabs {\mathit{fs}} x, \Gamma_2) \;\wedge\\
      \forall {x'}, \mathit{fs}'& \, . \; x' : \Uabs {\mathit{fs}'} {x'} \in \Gamma_2 \implies \mathit{fs}=\mathit{fs}'
  \end{align*}
  That is, $\mathit{fs}$ is the most recently enclosing function or
  subroutine in the context if there is a parameter $x$ in the
  environment associated with $\mathit{fs}$, and there is no other
  function or subroutine $\mathit{fs}'$ ahead of it.
\end{definition}
\noindent
For example, {polycontext} is used in \erulenamed{polyvar} (Figure~\ref{fig:expressionrules})
to generate units for a variable $x$
in a function $f$ as the abstract units variable $\Uabs f x$. By
contrast, variables in a monomorphic setting are assigned $\Uvar x$,
which cannot later be instantiated.
The rest of the expression rules generate various constraints:
compound units for multiplication and division (\erulenamed{mul},
\erulenamed{div}), equality constraints for addition
(\erulenamed{\oplus}) and boolean comparators (\erulenamed{\oslash}),
preservation of units for negation (\erulenamed{unary}), and
exponentiation of units by integer constants (\erulenamed{power}).
Function calls are by \erulenamed{fcall} and \erulenamed{fcallext}.

Figure~\ref{fig:expressionrules} uses some further helper predicates on
environments and for fresh identifier generation:
\begin{itemize}[leftmargin=1em]
\item $\freshcallid i$ asserts that $i$ is a fresh identifier in
this derivation (any another instance of $\freshcallid j$ is
guaranteed to satisfy $i\neq j$);
\item $\isfunction{f,\Gamma}$ and $\isarray{x,\Gamma}$ classify
$f$ and $x$ based on their (normal) types in $\Gamma$.%
\end{itemize}

\begin{figure}[t]
  {\footnotesize{
  \begin{equation*}
  \erule{polyzero}{\ujudge{\Gamma}{0}{\Uabs f
      l}{\cdot}}{\polycontext{f,\Gamma} & \freshlitid l} \;\;
  \erule{polylit}{\ujudge{\Gamma}{n}{\unitless}{\cdot}}{\polycontext{f,\Gamma}
    & n\neq 0} \;\;
  \erule{literal}{\ujudge{\Gamma}{n}{\Ulit l}{\cdot}}{\neg\polycontext{f,\Gamma} & \freshlitid l}
\end{equation*}}}
\vspace{-1em}
\caption{Rules for literal expressions}
\label{fig:literalrules}
\end{figure}

\subsubsection{Literals}\label{sec:literals}
The rules for literals bear some
explanation and are in the separate
Figure~\ref{fig:literalrules}. There are three relevant
rules that encode the handling of literal values recommended by
\citet{Kennedy:1996}. \erulenamed{polyzero} provides a `polymorphic
zero' when used in a context where polymorphism is
allowed. \erulenamed{polylit} requires all other literal values found
in a polymorphic context to be unitless. Outside of a polymorphic
context, \erulenamed{literal} generates a fresh units variable to be
solved; this differs to Kennedy's approach which we explain in more
detail in Section~\ref{sec:fsharp}.

Since zero is the additive
identity, and addition requires that all terms have the same units,
the only literal value that can act polymorphically is zero. If any
other literal value could have polymorphic units then a simple scalar
multiplication would unify with any units, rendering the checking
procedure useless. For example, if nonzero polymorphic literals were
allowed and we modified the code in Listing~\ref{lst:ball1} to be %
\inl{0.5 * a * t + v0 * t}, %
then it would pass the checker because it would assume that the
literal value \inl{0.5} could be given units \inl{sec}.

Still, we need a way for literals to be given units within functions
and subroutines. In the \fsharp\ units-of-measure extension you must
annotate literals directly, for example:%
\begin{minted}{fsharp}
let w = 2.0<kg> + 3.5<kg>
\end{minted}
However we avoid direct annotation because it violates our incremental
principle of \emph{harmlessness}\label{sec:harmlessness}: to maintain usability of existing
compilers and tools. We could instead create inline comment-based
annotations specifying the units, but this is cumbersome, not
supported by all Fortran compilers and potentially confusing. We
found that the most practical way of assigning units to nonzero
literals is to carve out an exception for the case of assigning a
literal value to a monomorphically annotated variable. In that case,
no constraint is generated. This condition is expressed by the rule
\strulenamed{assignliteral} in Figure~\ref{fig:statementrules} where
$\isMonomorphic u$ is true iff $u$ does not have any parametric
polymorphic units variables within it. When \inl{x} is a Fortran
variable annotated by monomorphic units, simple assignment statements
such as \inl{x = 2.0} are allowed, but any compound expression
involving nonzero literals falls back on \erulenamed{polylit}. This
particular design choice has the virtue of catching the common case of
declaring constants %
(e.g. \inl{real :: x = 2.0}) %
as well as discouraging the poor practice
\citep{Martin:2009:CleanCode} of scattering `magic numbers' throughout
code (e.g. \inl{x = y - 9.381}).

\subsubsection{Casting}\label{sec:casting}
The literal exception allows us to create `casting' values that help
convert between concrete units. For example, you can declare %
\inl{!= unit cm/in :: inchesToCm} followed by %
\inl{real :: inchesToCm = 2.54}. %
Abuse of casting can allow you to sidestep the units system but it is
a practical feature that cannot be avoided in real work, and this way
the conversion factors are named and documented rather than being
scattered around the code. We would argue that better practice would
be to isolate all of the necessary conversion factors in a common
module shared by the entire program (or better yet, a library we may
distribute along with our work). Conversion factor literals can only
be given monomorphic units. Section~\ref{sec:intrinsics} discusses
explicit, direct unit casts.

\subsubsection{Monomorphism restriction for nonzero literals}\label{sec:monorestrict}
The restriction of the literal exception to monomorphic units is
important for soundness. Otherwise it would be possible to write
programs that circumvent the relationships expressed through
polymorphism. For example, a function with units-polymorphic variable
\inl{x} could simply run a loop that computes $x^2=\sum_1^{x}x$ (see Listing~\ref{lst:cpfunits}). The
units inference process would simply interpret that as a series of
additions that return the exact same units as provided, rather than
the square of those units. More generally, allowing the user to
annotate nonzero literals with polymorphic units would break the
theorem of dimensional invariance~\citep{Kennedy:1996,atkey2015models}, which states
that the behaviour of dimensionally-correct equations is independent
of the specific units-of-measure used. For a counterexample, suppose
that we were actually allowed to write a units-polymorphic
function with a polymorphic literal, like $\sub f x=x+2$. Then it would not satisfy its dimensional
invariance theorem that $\sub f {k\cdot x}=k\cdot \sub f {x}$ for any $k$. To put some numbers on it, if we
take $x=1~\text{inch}$ then we find that $\sub f x=3~\text{inches}$. However with
$k=2.54~\text{cm/inch}$ then $\sub f {2.54\cdot x}=4.54~\text{cm}$, which is not
equal to $2.54\cdot \sub f x=7.62~\text{cm}$.

\subsubsection{Function calls}
There are two rules for function calls, \erulenamed{fcall} and
\erulenamed{fcallext}. The reason is because Fortran supports passing
functions as parameters, but does not have the concept of `function
types' to properly describe their behaviour. Instead, such function
parameters are simply declared \inl{external}. We support polymorphic
functions but not arbitrary rank
polymorphism~\citep{Jones:2007:ArbitraryRank}. That would allow a
function parameter to be instantiated with different units at each
call-site, but would easily generate inconsistencies under the rules
of our system. To avoid this problem we collapse all external function
parameter calls to call id 0 (chosen arbitrarily). That permits only a
single polymorphic instance to be created, which is shared by calls to
that function parameter within the enclosing function body. Future
work is to allow explicit annotations on Fortran \inl{interface} blocks
to allow arbitrary rank polymorphism.

\begin{figure*}[b]
{\small{\begin{minipage}{\textwidth}
\centering
\begin{equation*}
\begin{gathered}
  \strule{call}{
    \ststep{\Gamma}{\Delta}{\inlms{call}~\sub s{\overline e}}{\Delta,\Delta',u_1\coneq\Uuse s 1 i,\ldots,u_k\coneq\Uuse s k i}
  }{
    \freshcallid i & \ujudge{\Gamma}{\overline e}{\overline u}{\Delta'} &\text{where}~\overline u=u_1,u_2,\ldots,u_k
  }\\[0.4em]
  \strule{assignliteral}{
    \ststep{\Gamma}{\Delta}{x=n}{\Delta}
  }{
    n\neq 0 & \ujudge{\Gamma}{x}{\Uvar x}{\cdot} & \forall u. \prn{\Uvar x\coneq u \in \Delta \implies \isMonomorphic u}
  }\\[0.4em]
   \strule{assignzero}{
    \ststep{\Gamma}{\Delta}{x=0}{\Delta}
  }{
  }\qquad\qquad
  \strule{assign}{
    \ststep{\Gamma}{\Delta}{x=e}{\Delta,\Delta_2,u_1\coneq u_2}
  }{
    \ujudge{\Gamma}{x}{u_1}{\cdot} & \ujudge{\Gamma}{e}{u_2}{\Delta_2} & e\neq n
  }\\[0.4em]
  \strule{arrayassign}{
    \ststep{\Gamma}{\Delta}{\sub x{\overline e}=e}{\Delta,\Delta_2,\Delta_3,u_1\coneq u_2}
  }{
    \ujudge{\Gamma}{x}{u_1}{\cdot} & \ujudge{\Gamma}{e}{u_2}{\Delta_2}  & (e\neq n \vee e=0) & \ujudge{\Gamma}{\overline e}{\overline{\unitless}}{\Delta_3}
  }
\end{gathered}
\end{equation*}
\end{minipage}}}
\caption{Rules for statements}
\label{fig:statementrules}
\vspace{-0.5em}
\end{figure*}

\begin{figure*}[t]\footnotesize
\begin{minipage}{\textwidth}
\centering
\begin{equation*}
\begin{gathered}
  \blrule{list}{\cstep{\Gamma_0}{\Delta_0}{\overline B}{\Gamma_k}{\Delta_k}\qquad \text{where}~\overline B = B_1,B_2,\ldots,B_k}{\cstep{\Gamma_0}{\Delta_0}{B_1}{\Gamma_1}{\Delta_1} & \cstep{\Gamma_1}{\Delta_1}{B_2}{\Gamma_2}{\Delta_2} & \cdots & \cstep{\Gamma_{k-1}}{\Delta_{k-1}}{B_k}{\Gamma_k}{\Delta_k}} \\
  \blrule{annotpoly}{
    \cstep{\Gamma}{\Delta}{\inlms{!= unit U :: x}}{\Gamma}{\Delta,\Uabs f x\coneq \inlms{U}}
  }{
    \polycontext{f,\Gamma}
  }\;\;
  \blrule{annotmono}{
    \cstep{\Gamma}{\Delta}{\inlms{!= unit U :: x}}{\Gamma}{\Delta,\Uvar x\coneq \inlms{U}}
  }{
    \neg\polycontext{f,\Gamma}
  }\\
  \blrule{statement}{\cstep{\Gamma}{\Delta}{st}{\Gamma}{\Delta'}
  }{
    \ststep{\Gamma}{\Delta}{st}{\Delta'}
  }\qquad
  \blrule{typedecl}{
    \cstep{\Gamma}{\Delta}{type~\inlms{::}~x}{\Gamma,x:type}{\Delta}\qquad\text{where}~type\in\set{\inlms{real},\inlms{integer}}
  }{}\\
  \blrule{dimdecl}{
    \cstep{\Gamma}{\Delta}{\inlms{dimension ::}~\sub x{\overline n}}{\Gamma,x:\Array}{\Delta}
  }{}\qquad
  \blrule{extdecl}{
    \cstep{\Gamma}{\Delta}{\inlms{external ::}~x}{\Gamma,x:\external}{\Delta}
  }{}\\
  \blrule{if}{
    \cstep{\Gamma_0}{\Delta_0}{\inlms{if}~e~\inlms{then}~\overline B_1~\inlms{else}~\overline B_2~\inlms{end if}}{\Gamma_0}{\Delta_2,\Delta_e}
  }{
    \ujudge{\Gamma_0}{e}{\unitless}{\Delta_e} & \cstep{\Gamma_0}{\Delta_0}{\overline B_1}{\Gamma_1}{\Delta_1} & \cstep{\Gamma_0}{\Delta_1}{\overline B_2}{\Gamma_2}{\Delta_2}
  }\;\;
  \blrule{while}{
    \cstep{\Gamma_0}{\Delta_0}{\inlms{do while}~\prn{e}~\overline B~\inlms{end do}}{\Gamma_0}{\Delta_1,\Delta_2}
  }{
    \ujudge{\Gamma_0}{e}{u}{\Delta_1} & \cstep{\Gamma_0}{\Delta_0}{\overline B}{\Gamma_1}{\Delta_2}
  }
\end{gathered}
\end{equation*}
\end{minipage}
\caption{Rules for blocks}
\label{fig:blockrules}
\end{figure*}

\subsubsection{Statements and Blocks}

Statements (in the usual imperative sense) are typed by judgments
that have an incoming context $\Gamma$, set of constraints
$\Delta$ and an outgoing constraint set $\Delta'$:
\begin{equation*}
\ststep{\Gamma}{\Delta}{st}{\Delta'}
\end{equation*}
The rules are given in Figure~\ref{fig:statementrules}.

Blocks include syntax such as conditionals or loops, which have a
similar form to statements and are typed by judgments of the form:
\begin{equation*}
\cstep{\Gamma}{\Delta}{B}{\Gamma'}{\Delta'}
\end{equation*}
is a relation between $\prn{\Gamma,\Delta,B}$ and the subsequently
produced context and constraint set $\prn{\Gamma',\Delta'}$ (Figure~\ref{fig:blockrules}). Blocks
can include variable-type declarations (\blrulenamed{dimdecl},
\blrulenamed{extdecl}, \blrulenamed{typedecl}), and thus they generate a
new typing context $\Gamma'$.

The rules for \blrulenamed{annotpoly} and \blrulenamed{annotmono} in
Figure~\ref{fig:blockrules} are key to connecting the generated units constraints with
annotations of concrete units. The \blrulenamed{annotmono} rule introduces concrete units
\inl{U} for variable $x$ and thus generates a constraint $\Uvar x
\coneq \texttt{U}$; \blrulenamed{annotpoly} is similar but in a
polymorphic context hence $x$ may be units-polymorphic
and so the generated constraint applies the units variable representation
$\Uabs f x \coneq \texttt{U}$.

Rules for conditionals (\blrulenamed{if}) and loops
(\blrulenamed{while}) recursively apply
expression typing and block-typing rules, composing generated
constraint and context sets sequentially. Thus, we perform a
`may analysis': we take the conjunction of
constraints from all control-flow paths, similar to a standard
type inference.

\subsubsection{Program Units}

We introduce the notion of a template $T$ which maps
function and subroutine names to their associated constraints, in order to
support polymorphism (Section~\ref{sec:polyinst}) and to be
used in precompiled (preverified) files (Section~\ref{sec:fsmod}). Program units are typed
in Figure~\ref{fig:programunitrules} by judgments of the following form:
\begin{equation*}
\pustep{\Gamma}{T}{pu}{T'}
\end{equation*}
is a relation between $\prn{\Gamma,T,pu}$ and the subsequently
produced template $T'$.

Program units are grouped into files which can be modules
or top-level programs. Module-maps $M$
map module names to tuples $\prn{\Delta,T}$ where $\Delta$
contains constraints associated with module variables and $T$ are
templates. A final judgment form coalesces information from multiple modules:
\begin{equation*}
\modstep{M}{pu}{M'}
\end{equation*}
is a relation between $\prn{M,pu}$ and the subsequently produced
module-map $M'$.

\begin{figure*}[t]\small
\begin{minipage}{\textwidth}
\centering
\begin{equation*}
\begin{gathered}
  \purule{funres}{
    \pustep{\Gamma}{T}{
      \inlms{function}~\sub f{x_1,\ldots,x_k}\sub{\inlms{result}}{x_0}~\overline B_{\mathit{spec}}, \overline B_{\mathit{act}}~\inlms{contains}~\overline{pu}
    }{
      T',\prn{f\mapsto\Delta',\Delta''}
    }
  }{
    \deduce{
      \cstep{\Gamma_1}{\Delta}{\overline{B}_{\mathit{act}}}{\Gamma'}{\Delta'} \qquad
      \text{let}~\Delta''=\Gamma'\!\prn{x_0}\coneq\Uabs f 0,\ldots,\Gamma'\!\prn{x_k}\coneq\Uabs f k
    }{
      \cstep{\cdot}{\cdot}{\overline{B}_{\mathit{spec}}}{\Gamma_0}{\Delta} &\;
      \text{let}~\Gamma_1={\Gamma_0,x_0:\Uabs f {x_0},\ldots,x_k:\Uabs f {x_k}}\; &
      \pustep{\Gamma_1}{T}{\overline{pu}}{T'}
    }
  }\\[0.4em]
  \purule{fun}{
    \pustep{\Gamma}{T}{
      \inlms{function}~\sub f{x_1,\ldots,x_k}~\overline B_{\mathit{spec}}, \overline B_{\mathit{act}}~\inlms{contains}~\overline{pu}
    }{T'}
  }{
    \pustep{\Gamma}{T}{
      \inlms{function}~\sub f{x_1,\ldots,x_k}\sub{\inlms{result}}{f}~\overline B_{\mathit{spec}}, \overline B_{\mathit{act}}~\inlms{contains}~\overline{pu}
    }{T'}
  }\\[0.4em]
  \purule{subr}{
    \qquad
    \pustep{\Gamma}{T}{
      \inlms{subroutine}~\sub s{x_1,\ldots,x_k}~\overline B_{\mathit{spec}}, \overline B_{\mathit{act}}~\inlms{contains}~\overline{pu}
    }{T',\prn{s\mapsto\Delta',\Delta''}}\qquad
  }{
    \deduce{
      \cstep{\Gamma_1}{\Delta}{\overline{B}_{\mathit{act}}}{\Gamma'}{\Delta'}\qquad
      \text{let}~\Delta''=\Gamma'\!\prn{x_1}\coneq\Uabs s 1,\ldots,\Gamma'\!\prn{x_k}\coneq\Uabs s k
    }{
      \cstep{\cdot}{\cdot}{\overline{B}_{\mathit{spec}}}{\Gamma_0}{\Delta} &\;
      \text{let}~\Gamma_1={\Gamma_0,x_1:\Uabs s {x_1},\ldots,x_k:\Uabs s {x_k}}\; &
      \pustep{\Gamma_1}{T}{\overline{pu}}{T'}
    }
  }\\[0.4em]
  \purule{\!\!list}{
    \pustep{\Gamma}{T_0}{pu_1,\ldots,pu_k}{T_k}
  }{
    \pustep{\Gamma}{T_0}{pu_1}{T_1} & \cdots & \pustep{\Gamma}{T_{k-1}}{pu_k}{T_k}
  }\;
  \purule{\!\!module}{
    \modstep{M}{\inlms{module}~m~\overline B_{\mathit{spec}}~\inlms{contains}~\overline{pu}}{M,m\mapsto\prn{\Delta,T}}
  }{
    \cstep{\cdot}{\cdot}{\overline B_{\mathit{spec}}}{\Gamma}{\Delta} &
    \pustep{\Gamma}{\cdot}{\overline{pu}}{T}
  }\\[0.4em]
  \purule{program}{
    \modstep{M}{\inlms{program}~m~\overline B_{\mathit{spec}},\overline B_{\mathit{act}}~\inlms{contains}~\overline{pu}}{M,m\mapsto\prn{\Delta',T'}}
  }{
    \cstep{\cdot}{\cdot}{\overline B_{\mathit{spec}}}{\Gamma}{\Delta} &
    \pustep{\Gamma}{\cdot}{\overline{pu}}{T} &
    \cstep{\Gamma}{\Delta}{\overline B_{\mathit{act}}}{\Gamma'}{\Delta'}
  }
\end{gathered}
\end{equation*}
\end{minipage}
\caption{Rules for program units}
\label{fig:programunitrules}
\end{figure*}

\subsection{Interprocedural Polymorphic Instantiation}\label{sec:polyinst}\label{sec:interprocedural}

Once the constraints within all program units have been gathered and
stored in a template table we begin the process of instantiating
polymorphic function or subroutine templates as concrete
constraints. We identify all of the uses of a polymorphic function or
subroutine $\mathit{fs}$ by finding pairs $\prn{\mathit{fs},i}$ where $i$ represents the
call-ids associated with function or subroutine calls contained within
the constraints. We call these pairs `instances' and for each instance
we invoke a recursive function \texttt{substInstance} to look up the
associated function or subroutine template and invoke
$\instantiate u i$ (Definition~\ref{def:instantiate}) on all of the
abstract units $u$ within it.

\begin{definition}[Instantiation]\label{def:instantiate}
For units $u$ and a call-site identifier $i$ then
$\instantiate u i$ is defined:
  \begin{equation*}
    \instantiate u i =
    \begin{cases}
      \Uuse {\mathit{fs}} {\mathit{kxl}} i &\text{if }u=\Uabs {\mathit{fs}} {\mathit{kxl}} \\
      \UEAPuse {\alpha} i &\text{if }u=\UEAPabs {\alpha} \\
      u &\text{otherwise}
    \end{cases}
  \end{equation*}
\end{definition}
We distinguish ourselves from Osprey~\citep{Jiang:2006:ICSE} by
supporting units-polymorphic functions at all levels of the
call-graph, not only the leaves. This is accomplished with a call-id
remap function that works on the given template prior to instantiation
and rewrites all uses of polymorphism within the template to have a
fresh call-id specific to this particular call. The
\texttt{substInstance} function recurses down the call-graph
performing both call-id remap and instantiation of abstract units. We
have chosen to stop the recursion when a cycle in the call-graph is
detected. This means we support recursive units-polymorphic functions
in Fortran programs but more complicated `units-polymorphic
recursion'~\citep{Mycroft:1984:PolyRecursion} is not supported: with
recursive calls, polymorphic units are instantiated at the first level
and must remain the same throughout the recursion, trading a small
amount of expressiveness for better tractability. This is a common
choice with type systems and \citet{Kennedy:1996} also made the same
decision in his work.

\subsection{Reduction to Linear Diophantine Equations}\label{sec:modhnf}

We adopt and customise the method discussed by \citet{Wand:1991} and
further by \citet{Jiang:2006:ICSE}. First we extract constraints by
analysing the structure of program code, as described by
Section~\ref{sec:constraintgen}, and then we convert the constraints
to a set of linear Diophantine equations relating units variables that can be solved
simultaneously. Figure~\ref{fig:conversionrules} shows the rules that
translate constraints and units into linear equations with terms
corresponding to both unknown units variables $\unknownv{\cdots}$ and
known base units $\baseunit{\cdots}$. This conversion to linear
equation, and the subsequent solving procedure, deals with the
equational theory of units-of-measure (which form
an Abelian group).

\begin{figure}[t]
  \centering
  \begin{align*}
    & &\conv{u_1\coneq u_2} &\mapsto \conv{u_1} = \conv{u_2} \\
    \conv{u_1 u_2} &\mapsto \conv{u_1} + \conv{u_2} & \conv{u_1^z} &\mapsto z\conv{u_1} & \conv{\unitless} &\mapsto 0 \\
    \conv{U} &\mapsto \baseunit{U} & \conv{\UEAPabs{\alpha}}&\mapsto\baseunit{\alpha} & \conv{\Uvar{fs,\cdots}} &\mapsto \unknownv{fs,\cdots}
  \end{align*}
\vspace{-1em}
  \caption{Conversion of constraints to linear equations, where
    $\conv{-}$ is overloaded on constraints and units}
  \label{fig:conversionrules}
\end{figure}

\begin{figure*}[t]\small
  \begin{minipage}{\textwidth}
  \centering
  \begin{equation*}
\setlength{\arraycolsep}{0.35em}
  \begin{gathered}
    \text{given constraints of the form:}~c_1\lhs_1 + \cdots + c_m\lhs_m = b_1\rhs_1 + \cdots + b_n\rhs_n\\
    \text{where each unknown units variable $\unknownv{\cdots}$ has been assigned a column $C_j$ for $0<j\leq m$}\\
    \text{and each base unit $\baseunit{\cdots}$ has been assigned a column $B_j$ for $0<j\leq n$}
    \\[-0.25em]
    A_{\textit{aug}}=~
    \kbordermatrix{~                      & \lhs_1 & \cdots  & \lhs_m & \vr \rhs_1 & \cdots & \rhs_n \cr
                    \text{constraint}_1  & c_1         & \cdots  & c_m         & \VR b_1         & \cdots & b_n \cr
                    \qquad \vdots        & \vdots      & \ddots  & \vdots      & \VR \vdots      & \ddots & \vdots \cr
    }\qquad
    H=
    \kbordermatrix{~                      & \lhs_1 & \cdots  & \lhs_m & \vr \rhs_1 & \cdots & \rhs_n \cr
                    \text{constraint}_1  & c'_1=1         & \cdots  & c'_m         & \VR b'_1         & \cdots & b'_n \cr
                    \qquad \vdots        & \overset{0}{\vdots}      & \ddots  & \vdots      & \VR \vdots      & \ddots & \vdots \cr
    }
  \end{gathered}
  \end{equation*}
\vspace{-1em}
  \end{minipage}
  \caption{Matrices}
  \label{fig:matrices}
\end{figure*}

Once all constraints have been translated into linear equations, they
are set up into the standard $AX=B$ matrix equation form. For the
purposes of Gaussian elimination-style solvers this is represented as
an augmented matrix $A_{\textit{aug}}=\left[C B\right]$ as shown in
Figure~\ref{fig:matrices}. Each of the unknown terms
$\unknownv{\cdots}$ corresponds to a column in the matrix $C$ on the
left-hand side of $A_{\textit{aug}}$. Each of the known base units
$\baseunit{\cdots}$ corresponds to a column in the matrix $B$ on the
right-hand side of $A_{\textit{aug}}$. The linear equations are transcribed
into rows of $A_{\textit{aug}}$ by taking the co-efficients of each term and
placing them into the appropriate matrix under the corresponding
column for that term.

The Rouch\'e--Capelli theorem~\citep{Shafarevich:2012} states that if
the rank of the coefficient matrix $A$ is not equal to the rank of the
augmented matrix $A_{\textit{aug}}$ then the system of linear equations is
inconsistent, meaning a units error is detected and reported to the user.
Otherwise, the standard approach for solving the equations encoded in $A_{\textit{aug}}$
obtains the unique Hermite Normal Form (HNF) $H$, an upper-triangular
matrix from which the answers can be easily extracted by
back-substitution~\citep{Cohen:2013}.
\begin{definition}[Hermite Normal Form]\label{def:hnf}
  Given a matrix $M$, the Hermite Normal Form (HNF) $H$ is an integer
  matrix with the following properties:
  \begin{itemize}[leftmargin=1.5em]
  \item The leading co-efficient in a row, also known as the {\em pivot}, is
    the first nonzero positive integer in the row.
  \item If $j$ is the column of the pivot in row $i$ and $j'$ is the
    column of the pivot in row $i+1$ then $j < j'$.
  \item If there is a pivot at row $i$ and column $j$ then, for all
    rows $i'$ in the matrix:
    \begin{itemize}
    \item if $i' < i$ then $0\leq \sub H{i', j}<\sub H{i,j}$
    \item if $i' > i$ then $\sub H{i', j} = 0$
    \end{itemize}
  \end{itemize}
There are many well-explored algorithms for finding the HNF and it is
a unique form that can be computed for any integer
matrix~\citep{Cohen:2013}.
\end{definition}

\newcommand\Leadsto[1]{\overset{\mathsf{#1}}{\Rightarrow}\!\!\!}
\newcommand\hi[1]{{\color{cyan}{#1}}}

\begin{figure*}[t]
  \begin{minipage}{\textwidth}
  \centering
  \begin{equation*}
\setlength{\arraycolsep}{0.27em}
  \begin{gathered}
    constraints = \set{\inlm{v}\coneq\inlm{y}, \qquad \inlm{w}\coneq\inlm{y**4}, \qquad \inlm{y**4}\coneq\inlm{x**6}} \\[-0.1em]
    \cdots\Leadsto{HNF}
    \kbordermatrix{~ &  v &  w &  x &  y & \vr \cr
                     &  1 &  0 &  0 & -1 & \VR \cr
                     &  0 &  1 &  0 & -4 & \VR \cr
                     &  0 &  0 &  6 & -4 & \VR \cr
    }
    ~\Leadsto{mod}
    \kbordermatrix{~ &  v &  w &  x &  y & \vr \hi{\alpha} \cr
                     &  1 &  0 &  0 & -1 & \VR  \hi 0 \cr
                     &  0 &  1 &  0 & -4 & \VR  \hi 0 \cr
                     &  0 &  0 &  6 & -4 & \VR  \hi 0 \cr
                     &  \hi 0 & \hi 0 &  \hi 1 &  \hi 0 & \VR\hi{-2} \cr
    }
    ~\Leadsto{HNF}
    \kbordermatrix{~ &  v &  w &  x &  y & \vr \alpha \cr
                     &  1 &  0 &  0 &  \hi 3 & \VR -12 \cr
                     &  0 &  1 &  0 &  0     & \VR -12 \cr
                     &  0 &  0 &  1 &  0     & \VR  -2 \cr
                     &  0 &  0 &  0 &  \hi 4 & \VR -12 \cr
    }
    ~\!\Leadsto{{\mathbb{Z}}~RREF}\!
    \kbordermatrix{~ &  v &  w &  x &  y & \vr \alpha \cr
                     &  1 &  0 &  0 &  0 & \VR \hi{-3} \cr
                     &  0 &  1 &  0 &  0 & \VR \hi{-12} \cr
                     &  0 &  0 &  1 &  0 & \VR \hi{-2} \cr
                     &  0 &  0 &  0 &  1 & \VR \hi{-3} \cr
    }
  \end{gathered}
  \end{equation*}
  \end{minipage}
  \vspace{-0.4em}
  \caption{Example operation of modified Hermite Normal Form algorithm}
  \label{fig:modhnfexample}
\end{figure*}

Figure~\ref{fig:modhnfexample} contains an example set of constraints
(generated from code within a polymorphic context without any
monomorphic units defined) encoded into a matrix and stepped through
our modified HNF algorithm. For brevity, the first matrix $H$ is shown
already in HNF but any other arrangement of the rows would have
produced this canonical form. In this first HNF matrix, the problem
with simple HNF can be seen: we started with no units and we still
have no units. Plus, a na\"ive reading of the solution would suggest
that $1.5 x = y$ since $6x = 4y$ (last row). Therefore, the two tasks
for our algorithm are: derive a polymorphic units variable and find
integer solutions for all the unknown units. As we can see, simple HNF
will find integer solutions but they would be expressed in a way that
boils down to $3x=2y$, while what we want looks more like $x=2\alpha$
and $y=3\alpha$. Therefore, modified HNF needs to identify the
situation, derive this polymorphic units variable $\alpha$ that $x$
and $y$ have in common and solve for the two in lowest integer terms.

Our algorithm operates on the HNF $H$ of the input matrix. Each
time we detect a pivot at coordinates $i,j$ that cannot be
scaled down to 1, we take a step to modify matrix $H$ with a new
constraint $i'$ at the bottom and a new column $j'$ on the right:
$H\!\prn{i',j}=1$ and $H\!\prn{i',j'}=-d$ where
\begin{equation*}
  d=\frac{a}{\gcd\set{\sub H{i,j},a}}\quad
  a=\min_{j<k\leq j'}\setc*{\abs{H\!\prn{i,k}}}{\sub H{i,k}\neq 0}
\end{equation*}
This is shown in the Figure~\ref{fig:modhnfexample} example where in
the second matrix the last row and column have been added in this
fashion. We then find the HNF of the modified matrix, producing the
third matrix in the example. Applying two integer elementary row
operations produces the final matrix in reduced row echelon form, from
which the solution can be easily read:
$v=3\alpha, w=12\alpha, x=2\alpha, y=3\alpha$ where $\alpha$ is an
inferred polymorphic units variable represented by the new column.

\newcommand\rank[1]{\mathsf{rank}\!\prn{#1}}

\begin{listing}[t]
  \raggedright
  \textbf{input}: matrix $M$

  let $P_{result}$ be an empty list

  \begin{enumerate}[leftmargin=4em]
  \item[loop (0)] let $H_1=\mathsf{HNF}\!\prn{M}$ and $P$ be an empty list.

  \item for each row $i$ of $H_1$ below $\rank{H_1}$
    \begin{itemize}
    \item let $p_{i,j}$ be the pivot at row $i$,
      column $j$
    \item if $p_{i,j}>1$ and it divides every integer in row
      $i$ then apply that elementary row scaling operation to matrix
      $H_1$
    \item else if $p_{i,j}>1$ but it does not divide every integer in
      the row, then add it to the list $P$.
    \end{itemize}
  \item let $H_2$ be a copy of $H_1$ but only taking $\rank{H_1}$ rows.
  \item for each $p_{i,j}$ in list $P$
    \begin{itemize}
    \item enlarge $H_2$ by one row at index $i'$ and one column at
      index $j'$, with default values being zero.
    \item let {\small $a=\min\setc*{\abs{H\!\prn{i,k}}}{k\in\set{j+1,\ldots,j'}\wedge\sub H{i,k}\neq 0}$}
    \item let $d=a / \gcd\set{p_{i,j},a}$
    \item set $\sub H{i',j}:=1$ and $\sub H{i',j'}:=-d$
    \end{itemize}
  \item let $H_3=\mathsf{HNF}\!\prn{H_2}$
  \item apply integer elementary row scaling and addition operations
    to $H_3$ to change as many pivots as possible to 1 with zeros
    above and below them in the column
  \item let $H_4$ be a copy of $H_3$ but only taking $\rank{H_3}$ rows.
  \item add members of $P$ to $P_{result}$
  \item set $M:=H_4$
  \end{enumerate}
  repeat while $\len{P}>0$

  \textbf{output}: $M$ and $P_{result}$
  \caption{Pseudocode for modified-HNF procedure}
  \label{lst:modhnf}
\end{listing}

The general modified-HNF procedure is shown in pseudocode in
Listing~\ref{lst:modhnf}. The loop body is repeated until no new
columns are generated.

\begin{lemma}\label{lma:pivots}
  For every pivot $p_{i,j}$ detected that does not divide every
  integer in its row, the resulting matrix after the loop body
  completes will have two additional pivots equal to 1. One of them
  will be at column $j$ and one will be at some column greater than
  $j$.
\end{lemma}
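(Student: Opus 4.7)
The plan is to trace one iteration of the loop body of Listing~\ref{lst:modhnf} for a fixed offending pivot $p_{i,j}$ in $P$ and explicitly locate the two pivots equal to $1$ that the lemma promises. Let $i'$, $j'$, $a$, $d$ be the quantities constructed for this $p_{i,j}$ in step 4; the new row added to $H_2$ is $(0,\dots,0,1,0,\dots,0,-d)$, with a $1$ in column $j$ and a $-d$ in the newly created column $j'$, and no other row of $H_2$ has a nonzero entry at column $j'$.

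For the pivot at column $j$, I would observe that the gcd of column $j$ of $H_2$ is $\gcd(p_{i,j},1)=1$: the relevant entries are $p_{i,j}$ at row $i$, the new $1$, zeros in the rows of $H_1$ below row $i$ (since $H_1$ was in HNF), and values in $[0,p_{i,j})$ above. Because the HNF of an integer matrix assigns each pivot column a pivot equal to the gcd of the appropriate sub-column, the pivot of $H_3=\mathsf{HNF}(H_2)$ at column $j$ is $1$; equivalently, HNF promotes the new row into the $j$-pivot slot and applies (row $i$) $\leftarrow$ (row $i$) $- p_{i,j}\cdot$(new row), zeroing the former pivot entry. That gives the first pivot equal to $1$.

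For the second pivot, I would track what the same reduction does to row $i$. It becomes $(0,\dots,0,H_1[i,j+1],\dots,H_1[i,j'-1],\,p_{i,j}d)$ with $p_{i,j}d=\operatorname{lcm}(p_{i,j},a)$ at column $j'$ by the definition $d=a/\gcd(p_{i,j},a)$. Since $p_{i,j}$ was added to $P$ precisely because it fails to divide every entry of row $i$, this reduced row is not identically zero beyond column $j$, so HNF places a pivot at some column $k_0 \in (j,j']$. The key arithmetic is that $a=|H_1[i,k_a]|$ appears in the tail at some column $k_a$ and divides $\operatorname{lcm}(p_{i,j},a)$; in the base case (as in Figure~\ref{fig:modhnfexample}) these are the only two nonzero tail entries, their gcd equals $a$, and step 6's integer scaling by $1/a$ brings the pivot at $k_0=k_a$ down to $1$.

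The main obstacle is generalizing this last step to rows $i$ with more than one nonzero tail entry: there the gcd of the reduced row alone need not equal the pivot produced by HNF, and one must combine the gcd argument with step 6's integer row additions against the other rows of $H_3$ (whose structure is tightly controlled by $H_1$ having been in HNF) to still reach pivot $1$ at some $k_0>j$. I expect this case analysis, anchored in the precise definitions of $a$ and $d$ ensuring enough divisibility between the new lcm entry at column $j'$ and the surviving tail values, to be the technical heart of the proof.
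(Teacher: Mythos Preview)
Your plan matches the paper's proof in spirit and in the base case; the bookkeeping differs only in which row carries which label. Where you let HNF promote the new row to the $j$-pivot slot and then track the reduced old row $i$ (with tail $H_1[i,j{+}1],\dots,H_1[i,j'{-}1],\,p_{i,j}d$), the paper instead subtracts $(p_{i,j}-1)$ times row $i'$ from row $i$ so that row $i$ itself carries the leading $1$ at column $j$, and then subtracts row $i$ from row $i'$ so that the tail lands in row $i'$. Up to a global sign these are the same vector, and both arguments finish by scaling by $a^{-1}$ and appealing to $a\mid dp_{i,j}$, which is exactly why $d=a/\gcd(p_{i,j},a)$ was chosen.

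The substantive point is your last paragraph. You correctly isolate the case where row $i$ has more than one nonzero entry to the right of the pivot as the delicate one: scaling by $a^{-1}$ need not be an integer operation on the other tail entries, and $j_a$ need not be the leftmost surviving column. The paper, however, does \emph{not} carry out the case analysis you anticipate. Its proof only explicitly checks divisibility at column $j'$ (the $dp_{i,j}$ entry) and otherwise reads as if $j_a$ were the unique nonzero tail column, exactly as in the worked example of Figure~\ref{fig:modhnfexample}. So your base-case argument is already at the level of detail the paper provides; the ``technical heart'' you expect to find there is absent, and your caution about the multi-tail case is well placed rather than a gap relative to the paper.
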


\begin{proof}
  The algorithm has generated a new row $i'$ with a 1 in the $j$
  column and the value $d$ in the new $j'$ column.

  Elementary row subtraction of the two rows with the scaling factor
  $\prn{p_{i,j}-1}$ applied to row $i'$ leaves behind a leading 1 and a
  trailing $d\prn{p_{i,j}-1}$ in row $i$.

  Let $a$ be the minimum nonzero absolute value found in row $i$ to
  the right of the pivot, in column $j_a>j$.

  Since $i,j$ is a pivot with value 1, we must subtract row $i$ from
  row $i'$ to cancel the value of 1 we wrote into $i',j$. This leaves
  a value of $-a$ at $\prn{i',j_a}$ and a value of $-dp_{i,j}$ at
  $\prn{i',j'}$.

  We can now obtain the value 1 in row $i'$ and column $j_a$ by applying
  elementary row scaling with a factor of $-a^{-1}$. In order for this
  to maintain integer results the value of $dp_{i,j}$ must be
  divisible by $a$. Therefore when writing the value of $d$ into the
  matrix we selected $d=a/\gcd\set{p_{i,j},a}$ as the minimum such
  integer value.
\end{proof}

\begin{theorem}
  For any matrix given as input, the modified-HNF procedure
  terminates.
\end{theorem}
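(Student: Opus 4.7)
The plan is to identify a well-founded termination measure on the matrix $M$ that strictly decreases with each iteration of the outer loop whenever $|P| > 0$. A natural choice is $\mu(M) = |B(\mathsf{HNF}(M))|$, the number of "bad" pivots in the Hermite Normal Form of $M$, where a pivot $p_{i,j}$ is bad if $p_{i,j} > 1$ and it does not divide every integer in row $i$. Since the HNF is unique (Definition~\ref{def:hnf}), $\mu$ is well-defined, and since $\mu(M) \in \mathbb{N}$ it is well-founded. Termination then reduces to proving $\mu(M') < \mu(M)$ whenever the loop body executes with $|P| > 0$, where $M'$ is the matrix $H_4$ produced at the end of step 7.

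The central step is to apply Lemma~\ref{lma:pivots} once for each bad pivot collected into $P$ during step 1. The lemma certifies that for every bad pivot $p_{i,j} \in P$, after the augmentation in step 4 and the HNF recomputation in step 5, two new pivots of value exactly $1$ appear — one at column $j$ (replacing the original bad pivot) and one at some column strictly greater than $j$. Thus each pivot in $P$ contributes $1$ to $\mu(M)$ but contributes $0$ to $\mu(M')$, giving a strict decrease of at least $|P| \geq 1$ per iteration.

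The hard part is ruling out the possibility that the HNF recomputation introduces \emph{new} bad pivots elsewhere, which could offset the progress. To close this gap I would establish an invariant: pivots in $H_1$ that were already equal to $1$, or that were greater than $1$ but divided their row (and hence were scaled down to $1$ in step 1), remain non-bad in $H_3$, because rows disjoint from the augmentation are preserved up to reordering into HNF. The only rows potentially bearing new pivots are the $|P|$ rows added in step 4; the construction $d = a/\gcd\{p_{i,j}, a\}$ is chosen precisely so that the integer elementary operations of step 6 reduce each such new pivot to $1$. Consequently no fresh bad pivots arise, and the strict drop in $\mu$ is preserved.

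Finally, since $\mu(M) \geq 0$ and each iteration with $|P| > 0$ decreases $\mu$ by at least $1$, the loop exits after at most $\mu(M_0)$ iterations beyond the initial HNF, where $M_0$ is the input. I anticipate the invariant argument in the third paragraph — tracking the effect of the HNF recomputation on rows untouched by the augmentation — to be the most delicate step, since a careless application of row operations across the whole matrix could in principle regenerate bad pivots; the proof must exploit the specific column-indexed structure guaranteed by Definition~\ref{def:hnf}.
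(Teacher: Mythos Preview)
Your proposal is correct and follows essentially the same termination argument as the paper: both use Lemma~\ref{lma:pivots} to show that the count of ``bad'' pivots (those $>1$ not dividing their row) strictly decreases on each iteration. The paper's bookkeeping is slightly different and a bit slicker: rather than tracking all $|P|$ bad pivots and then arguing an invariant that no fresh bad pivots appear, it focuses on the \emph{right-most} bad pivot $p_{i,j}$ and counts columns versus unit pivots --- the iteration adds $|P|$ new columns but produces at least $|P|+1$ pivots equal to~$1$ (the $|P|$ fresh columns plus column~$j$ itself), so the pool of Lemma~\ref{lma:pivots} candidates shrinks by at least one. This column-counting formulation lets the paper avoid stating the row-preservation invariant you flag as ``the hard part,'' though in substance the same structural fact about HNF is being relied upon; your version is simply more explicit about where that reliance lies.
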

\begin{proof}
  Let $\len P$ be equal to the number of pivots qualifying under
  Lemma~\ref{lma:pivots} and let $p_{i,j}$ be the right-most such
  pivot. By the lemma, there will be pivots at column $j$ and some
  column $j'>j$, both equal to 1 in the next iteration. The matrix
  then will have $\len P$ more columns but at least $\len P + 1$
  pivots will have been made equal to 1. This reduces the pool of
  potential Lemma~\ref{lma:pivots} candidates by at least one on each
  iteration, until all are exhausted.
\end{proof}

Almost all real-world examples tried so far have been solved in just a few
iterations. The worst case occurs when there are constraints between a
set of variables raised to a sequence of strongly co-prime exponents,
e.g., $x_1^{p_1}+\cdots+x_k^{p_k}$ where $p_1,\ldots,p_k$ are distinct
prime numbers. This type of program is unlikely to be found in the
wild, since it lacks any sensible physical interpretation.  We
artificially generated an example containing terms raised to the first
thirteen prime numbers. That produced a solution for $x_1$ with units
that had to be raised to a power of over 300 trillion. Any more terms
than that soon ran into the upper-bound of 64-bit integers within our tools
and hardware. Even in these artificial cases, the number of loop
iterations was approximately proportional to the number of constraints
between relatively prime co-efficients.

\begin{theorem}
  The modified-HNF procedure outputs only integer matrices.
\end{theorem}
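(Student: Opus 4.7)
The plan is to argue by induction on the number of loop iterations of the procedure in Listing~\ref{lst:modhnf}, showing that the matrix remains over $\mathbb{Z}$ after every step of the loop body. The base case is immediate: the initial matrix $M$ is integer because the conversion rules of Figure~\ref{fig:conversionrules} produce coefficients only in $\mathbb{Z}$ (units variables and base units carry integer exponents by the definition of $u$, with only $z \in \mathbb{Z}$ admissible in the $u^z$ case).

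For the inductive step I would walk through the loop body operation by operation. Step~0 (computing $\mathsf{HNF}(M)$) preserves integrality since the Hermite Normal Form of an integer matrix is integer (standard result, e.g.~\citet{Cohen:2013}). Step~1 only performs an elementary row-scaling on row $i$ by $1/p_{i,j}$ in the subcase where $p_{i,j}$ divides every entry of row $i$, which again yields integer entries; the other subcase merely records the pivot in $P$ without altering $M$. Step~2 restricts $H_1$ to its first $\rank{H_1}$ rows, an obviously integer-preserving projection. Step~3 writes into $H_2$ only the values $1$ and $-d$, where $d = a/\gcd\{p_{i,j},a\}$; since $\gcd\{p_{i,j},a\}$ divides $a$ by definition, $d \in \mathbb{Z}$. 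Step~4 is again an HNF computation on an integer matrix.

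The main obstacle is Step~5, which is informally described as applying ``integer elementary row scaling and addition operations'' to clean up the pivots into $1$s with zeros above and below. What must be justified is that such integer operations exist at all, since in general turning a pivot into $1$ requires dividing the row by the pivot, which is not integer-preserving. This is exactly what Lemma~\ref{lma:pivots} buys us: for each offending pivot $p_{i,j}$ recorded in $P$, the preceding stages of the loop body guarantee two new pivots \emph{already equal to $1$} appear in $H_3$, one at column $j$ and one at some column $j' > j$. Since these pivots are unit, clearing all entries above them in their columns (and above any other unit pivots) requires only subtractions of integer multiples of the pivot row, which preserve integrality. The qualifier ``as many pivots as possible'' means we simply skip any residual non-unit pivots; these remain integer and will be handled on a subsequent iteration (as per the termination argument).

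Finally, since Steps~6--8 are just truncation, appending to the integer bookkeeping list $P_{result}$, and reassignment, they trivially preserve integrality, completing the inductive step and hence the theorem.
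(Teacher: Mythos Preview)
Your proposal is correct and follows essentially the same operation-by-operation argument as the paper's (much terser) proof: HNF preserves integrality, division is only performed when the divisor divides every entry, $d=a/\gcd\{p_{i,j},a\}$ is integer, and the remaining steps are integer elementary row operations by specification. The detour through Lemma~\ref{lma:pivots} at Step~5 is unnecessary for this theorem, since the algorithm text already restricts Step~5 to \emph{integer} elementary operations and the qualifier ``as many pivots as possible'' means non-reducible pivots are simply left alone; the lemma is relevant to termination and correctness, not to integrality.
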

\begin{proof}
  The procedure invokes HNF and integer elementary row operations. It
  only allows division if all the numbers are cleanly divisible. The
  value of $d$ is chosen to be an integer by dividing by the $\gcd$ of
  its numerator's components. All these operations produce integer results
  or matrices, therefore modified-HNF outputs only integer matrices.
\end{proof}

\subsection{Polymorphism and Critical Variables}\label{sec:backendpoly}
In the case of consistent but under-determined systems, some of the
unknown units variables will have many possible solutions. This means
either the provided units annotations are insufficient to give a
single solution, or some variables are units-polymorphic. The
distinction is made based on context (see
Definition~\ref{def:polycontext}): a variable (or formal argument)
declared within a function or subroutine is treated as potentially
units-polymorphic. Outside of a function or subroutine a single units
solution for each variable must be found or else it is considered
underspecified: a candidate `critical variable' that requires manual
annotation. Critical variables are displayed in the output of the
units-suggest mode, as described in Section~\ref{sec:suggest}.

Past work by \citep{Contrastin:2016} has stopped at this point because
all of the consistent uses of polymorphic functions (and subroutines)
will be solved correctly and assigned a single `monomorphic' units
solution in each instance. This gives `implicit' polymorphism:
unannotated variables in functions are quietly treated as
units-polymorphic if possible. In this work, we add support for
`explicit' polymorphism to CamFort, allowing the programmer to write
annotations using an ML-like syntax such as {\tt 'a} or {\tt 'b} %
as shown in Listing~\ref{lst:double}. We also allow the inference
engine to generate explicit specifications automatically for functions
that are implicitly polymorphic, as shown in
Listing~\ref{lst:inferhelper} (left).

\citet{Kennedy:2009:CEFP} and related systems such as \fsharp\ are
extensions of Hindley-Damas-Milner type systems
\citep{Hindley:1969,Damas:1982}, which naturally support inference of
polymorphism and generation of explicit annotations. However, \fsharp\
is stratified (see Section~\ref{sec:fsharp}) therefore it requires a
special and explicit syntax applied to type annotations when units are
involved.  This discourages incremental insertion of units into
existing programs: annotating one variable could require further
changes to be propagated all throughout the program.

Our approach instead assists with \emph{incremental
  specification}. Even with little or no explicit units information
provided, it infers explicit polymorphic units annotations as much as
possible for the program being analysed. One way for that to happen is
through the operation of the modified HNF procedure
(Section~\ref{sec:modhnf}), which produces additional columns that are
categorised as `new base units' representing generated polymorphic
units variables, with an example in
Figure~\ref{fig:modhnfexample}. But even without those additional
columns it is also possible to generate new base units. This occurs
when abstract units variables for function parameters, $\unknownv{\Uabs f k}$, are
underconstrained. We treat some of the unknown abstract units
variables as if they were base units themselves and then solve the
equations in those terms. In other words, we select some of the
non-return unknown abstract unit variables and shift them over to the
right-hand side of the equation. Thus, for each row $\left(c'_1\lhs_1\cdots{}c'_m\lhs_m \middle| b'_1\rhs_1\cdots{}b'_n\rhs_n\right)$
of the solved augmented matrix $H$ from Figure~\ref{fig:matrices},
we rearrange the
row to correspond to the equation/constraint $\sum{\textbf{C}} =
\sum{\textbf{B}_1} - \sum{\textbf{B}_2}$ where:
\begin{align*}
  \setlength{\arraycolsep}{0.05em}
  \begin{array}{lll}
  \textbf{C}&=\setc*{ c'_j \lhs_j}{0<j\leq m\wedge
              \prn{\lhs_j\neq\unknownv{\Uabs f k}\vee k = 0}}
    & \;\;\text{\small{(cols. that are not unknown abstract units)}} \\
    \textbf{B}_1&=\setc*{ b'_j \rhs_j}{0<j\leq n}
    & \;\;\text{\small{(previous RHS)}} \\
  \textbf{B}_2&=\setc*{ c'_j \baseunit{f,k}}{0<j\leq
                m\wedge\lhs_j=\unknownv{\Uabs f k}\wedge k > 0}
    & \;\;\text{\small{(newly generated base units)}}
  \end{array}
\end{align*}
where $\baseunit{f,k}$ denotes a generated `base unit' corresponding
to an implicit units-polymorphic variable. Thus, $\sum{\textbf{B}_2}$
comprises the sum of the newly generated base unit multiplied by the coefficients $c'_j$ of the corresponding
unknown abstract units from the previous left-hand side.

This rearrangement reduces the solution space, ideally to a single point, and
provides a single units assignment for the polymorphic variables that
previously had many possible answers. From that point, all that
remains is presentation: generation of human-readable,
properly-scoped, explicit polymorphic variable names, which is straightforward.

\subsection{Modules and Separate Analysis}\label{sec:fsmod}

Larger programs can potentially generate thousands of constraints or
more, making our system potentially unworkable in real projects unless
the problem is partitioned into manageable chunks (see the quantitative
study in Section~\ref{sec:scale}). Since Fortran 90,
programmers have had the ability to organise their code into modules
that have separate namespaces. Modules contain
variables, subroutines and functions, and may `use' or import
definitions from other modules. The rules for program units
(Figure~\ref{fig:programunitrules}) gather function
and subroutine templates and associate them with modules via
`module-mappings' between module names and the tuple $\prn{\Delta,T}$
where $\Delta$ contains module-variable constraints and $T$ is a
mapping of function/subroutine names to their associated constraint
template. Module-maps can be saved to files known as `fsmod-files' in
our implementation and loaded at a later
time. This imitates a common approach used by Fortran compilers (and many other languages) for
separate compilation.

However, if we were to simply store and re-load every single
constraint found in the templates then we would not reduce the size of
the global constraint-solving problem and would continue to face the
same scalability problems as before. Instead, we follow venerable
data-flow analysis practice \citep[e.g.][]{Barth:1978} and \emph{summarise}
by solving for the units of each function or subroutine parameter in
simplest terms, using our algorithm from Listing~\ref{lst:modhnf} as
described in previous sections. In particular, the ability to infer
explicit polymorphism as described in Section~\ref{sec:backendpoly}
means that we can even reduce the templates of polymorphic functions
to a signature (a minimal set of constraints) that describes the
relationship between their parameters. For example, the \inl{square}
function from Listing~\ref{lst:helpmod} would be described wholly by
the constraint signature $\Uabs f 0\coneq\UEAPabs{\alpha}^2$ and
$\Uabs f 1\coneq\UEAPabs{\alpha}$ where $\alpha$ is a freshly-generated
polymorphic units-variable. Any intermediate constraints regarding the
internals of the function are dropped. This can result in a stark
reduction in the problem size when there are many imports. As an
example, with one of our test-case scientific programs, simulating the
Navier-Stokes equation~\citep{griebel1997numerical}, the main module
that had previously imported approximately 5,000 constraints only
needed 471 constraints when we applied our signature reduction
technique to the data stored in fsmod-files. This leads to roughly a
5x speed up in solving overall (see Section~\ref{sec:scale}).

The module-based solver also allows us to handle another practical
problem: dealing with third party libraries that are supplied without
source code or that cannot be processed with CamFort for any
reason. In such cases the programmer can annotate a stand-in stub
module with the desired units and compile it to an fsmod-file. That file
can be saved or even distributed, and it is loaded by our tool when it
performs inference on a program using the library. Similarly, built-in
Fortran functions are handled by an internal table of
units for each of their parameters (see Section~\ref{sec:intrinsics}).

\subsection{Partial Annotation of Programs}\label{sec:incremental}

Several features of our system combine to enable an incremental
approach to units specification. The \texttt{units-suggest} command
informs the user of a subset of variables to consider giving explicit
annotations. Our annotations are comments, to avoid breaking other
tools and compilers.

Even without annotations, we can infer implicit polymorphic
units and derive explicit specifications for functions and
subroutines. That alone can detect inconsistent usages of variables,
for example an expression such as \inl{sqrt(x) + x} would force our
system to infer the unitless specification \texttt{unit $\unitless$ :: x}, which would be
noticeable when looking over the output of \texttt{units-infer}. Our
handling of variables and literal values makes it easy to introduce
units on one variable and let the global-constraint solver propagate
them as far as possible (see comparison with F\# which does not allow
this in Section~\ref{sec:fsharp}). Separate analysis of modules improves the
performance of the analysis but also helps to organise and
compartmentalise the specification effort.

\subsection{Intrinsics and unit casting}\label{sec:intrinsics}
Fortran language standards specify a base library of arithmetic and trigonometric functions
called `intrinsics'. Our implementation assigns to each of these a suitable polymorphic
unit specification, e.g., \inl{abs} takes an argument with unit \punitA{} and returns
a result of unit \punitA{}. The \inl{transfer} intrinsic performs unsafe
type casts in Fortran, where \inl{transfer(e1, e2)} returns the value of expression \inl{e1}
but at the type of the expression \inl{e2}. We specify \inl{transfer} as taking
 parameters of unit \punitA{} and \punitB{}, returning a result of unit \punitB.
Thus, this may be used for explicit unit casting if using a conversion-factor value (Section \ref{sec:casting}) is not possible.

\section{Scalability}\label{sec:scale}
We provide some quantitative evidence of the scalability of our approach
by timing our tool in inference mode. We algorithmically synthesised
Fortran programs with large numbers of (polymorphic) constraints which
were then generated into two extensionally equivalent code bases
(1) a single file containing many functions; and
(2) multiple files containing a single function, with a top-level module.
By increasing the size of the programs and timing the two classes, we can
observe how our approach supports and promotes modularised code.

\paragraph{Program synthesis}

Algorithmic synthesis of test programs has four parameters
varied as follows:
\begin{enumerate}[leftmargin=1.5em]
\item number of functions, $n \in \{5, 10, 15\}$;
\item length of each function, $l \in \{5, 10, 15, 20\}$;
\item number of function arguments, $a \in \{2, 4\}$;
\item single or multiple-file code base, $\textit{fmt} \in \{\textsf{single}, \textsf{mult}\}$
\end{enumerate}
A function of length $l$ comprises $l$ local variables and $l-2$
assignments of the form \inl{vi = vx * vy} where \inl{vx} is the
$\prn{i-2}$\nth variable and \inl{vy} is the $\prn{i-1}$\nth variable. This
creates a large number of interrelated constraints and tests both
multiplication of units and equality (which has the same effect as constraints
generated by addition). The return result of the function is the $l$\nth
variable. For example, a file generated from parameters
$\textit{fmt} = \textsf{mult}$, $l = 5$, and $a = 2$ is:
\inputminted{fortran}{generated.f90}
\noindent
The parameters of \inl{f1} have units inferred as \texttt{'a :: v1}
and \texttt{'b :: v2} the result as \texttt{('a**2) ('b**3) :: f1}.
The exponents of the polymorphic units variables in the result of
generated programs are the $\prn{l-1}$\nth and $\prn{l-2}$\nth
Fibonacci numbers and each variable \inl{vi} can be expressed as the
multiplication of \inl{v1} and \inl{v0} raised to the $\prn{i-1}$\nth
and $\prn{i-2}$\nth Fibonacci numbers respectively. The fact that
every three consecutive Fibonacci numbers are pairwise co-prime
ensures that intermediate exponents will not be trivially divided away.

The top-level generated program is a subroutine with $l \times n$
parameters which are used to call each function with unique
parameters, but all the units of all the results are unified by
assigning to an additional single, common variable.

\paragraph{Method}

We enumerated the parameter space and recorded the mean time of five
trials for each instance, also calculating the standard error based on
sample variance.  For the multi-file programs, we ran
our tool in \texttt{units-compile} mode to generate intermediate
precompiled files then used \texttt{units-infer} on the
top-level program, importing the precompiled files.  In the single-file mode, all functions are in one
file thus requiring just one invocation of \texttt{units-infer}.

Measurements were taken on a 3.2 Ghz Intel Core i5 machine with 16 GB
 RAM, Mac OS 10.13.4. %

\paragraph{Results}

Table~\ref{tab:scale-results} gives the results. It is clear that the
single file programs suffer from non-linear scaling (somewhere between
quadratic and cubic time), whereas multiple-file compilation scales
well. Our approach thus successfully mitigates the high-cost of global constraint
 solving for units-of-measure in highly polymorphic contexts (which is
 expensive, as shown by the single file case).

For comparison, Table~\ref{tab:scale-results2} gives the inference
time for two real numerical models: a
Navier-Stokes fluid model~\citep{griebel1997numerical},
split across six modules totalling 517 physical lines of
code; and \emph{cliffs} a tsunami model comprising 2.7kloc over 30
modules~\citep{tolkova2014land}.\footnote{Source line counts are of
  physical lines (excluding comments and whitespace) generated by
  SLOCCOUNT~\citep{wheeler2001sloccount}.} We also include
inference times for the case study modules considered in
Section~\ref{sec:studies} taken from \emph{wrf}, the Weather Research and Forecast Model~\citep{Skamarock:2008:WRF}.

Navier-Stokes is comparable in size
to the $n = 15, l = 15, a=2$ programs (514 physical lines of code
in single file and 573 lines of code in 15 files).  We also
coalesced the program units of the Navier code into a single file
(\emph{navier-alt}) to compare with the single file performance on similar-sized generated files. The
constraints generated from the real-world program are clearly less
onerous in the single-file case than our generated programs.%

\newcommand{\errr}[1]{\!\!\!{\small{$\pm$#1}}}
\newcommand{\sloc}[1]{\textcolor{darkgrey}{\small{#1}}}
\begin{table}
  \begin{tabular}{c|c||p{1.0em}S[table-format=4.3]l|p{1.0em}S[table-format=1.3]l||p{1.0em}S[table-format=3.3]l|p{1.0em}S[table-format=2.3]l}
 & &  \multicolumn{6}{c||}{$a = 2$} & \multicolumn{6}{c}{$a = 4$} \\
n  & l  & loc & \multicolumn{2}{c}{\textsf{single} (s)} &  loc &
 \multicolumn{2}{c||}{\textsf{mult}  (s)}
 & loc & \multicolumn{2}{c}{\textsf{single} (s)} & loc &
 \multicolumn{2}{c}{\textsf{mult} (s)} \\
 \hline \hline
5  &  5  & \sloc{74} & 0.649 & \errr{0.002} & \sloc{93} & 0.557 & \errr{0.005} &
\sloc{84} & 0.790  & \errr{0.004} & \sloc{103} & 0.687  & \errr{0.013} \\
 5  & 10 & \sloc{124} & 1.690 & \errr{0.003} & \sloc{143} & 0.883 &  \errr{0.001} &
\sloc{134} & 2.026  & \errr{0.003} & \sloc{153} & 1.228  & \errr{0.024} \\
 5  & 15 & \sloc{174} & 3.681 & \errr{0.003} & \sloc{193} & 1.200 & \errr{0.003} &
\sloc{184} & 4.640  & \errr{0.003} & \sloc{203} & 1.696  & \errr{0.030} \\
 5  & 20 & \sloc{224} & 7.504 & \errr{0.003} & \sloc{243} & 1.513 & \errr{0.008} &
\sloc{234} & 9.959  & \errr{0.006} & \sloc{253} & 2.234  & \errr{0.041} \\
10  &  5 & \sloc{144} & 2.141 & \errr{0.002} & \sloc{183} & 1.254 & \errr{0.013} &
\sloc{164} & 1.265  & \errr{0.002} & \sloc{203} & 1.675  & \errr{0.028} \\
10  & 10 & \sloc{244} & 6.979 & \errr{0.003} & \sloc{283} & 1.915 & \errr{0.012} &
\sloc{264} & 21.024  & \errr{0.010} & \sloc{303} & 5.714  & \errr{0.159} \\
10  & 15 & \sloc{344} & 56.492 & \errr{0.127} & \sloc{383} & 3.364 & \errr{0.040} &
\sloc{364} & 103.297  & \errr{0.048} & \sloc{403} & 12.224  & \errr{0.386} \\
10  & 20 & \sloc{444} & 41.853 & \errr{0.232} & \sloc{483} & 3.658 & \errr{0.023} &
\sloc{464} & 142.871  & \errr{0.041} & \sloc{503} & 14.926  & \errr{0.463} \\
15  &  5 & \sloc{214} & 7.692 & \errr{0.005} & \sloc{273} & 2.074 & \errr{0.019} &
\sloc{244} & 2.542  & \errr{0.007} & \sloc{303} & 3.972  & \errr{0.069} \\
15  & 10 & \sloc{364} & 38.043 & \errr{0.006} & \sloc{423} & 3.915 & \errr{0.040} &
\sloc{394} & 50.524  & \errr{0.080} & \sloc{453} & 35.471  & \errr{0.898} \\
15  & 15 & \sloc{514}  & 205.565 & \errr{0.059} & \sloc{573} & 6.014 & \errr{0.070} &
\sloc{544} & 284.832  & \errr{0.107} & \sloc{603} & 99.916  & \errr{2.624} \\
15 & 20  & \sloc{664} & 1009.508 & \errr{0.417} & \sloc{723} & 8.496  & \errr{0.107} &
\sloc{694} & 456.948  & \errr{0.120} & \sloc{753} & 81.796  & \errr{2.101}
  \end{tabular}
\vspace{0.6em}
\caption{Timing results for scalability experiments
 (in seconds, to 3 s.f. with standard (sample) error)}
\label{tab:scale-results}
\end{table}

\begin{table}
\vspace{-1.2em}
\begin{tabular}{r|S[table-format=4]|c||S[table-format=3.3]l}
package & loc & files & \multicolumn{2}{c}{\textsf{inference time}(s)} \\ \hline\hline
\textit{navier} & 517 & 6 & 10.241 & \errr{0.303} \\
\textit{navier-alt} & 481 & 1 & 47.704 & \errr{0.002} \\
\textit{cliffs} & 2706 & 30 & 40.881 & \errr{0.121} \\ \hline
\emph{wrf} & \multicolumn{4}{l}{\textit{}} \\ \hline
  (chem) \textsf{aerorate\_so2} & 82 & 3 & 0.370 & \errr{0.002} \\
  (phys) \textsf{module\_sf\_oml} & 127 & 1 & 3.361 & \errr{0.002} \\
  (phys) \textsf{module\_cam\_wv\_saturation} & 693 & 1 & 20.685 & \errr{0.008}
\end{tabular}
\vspace{0.6em}
\caption{Timing results on some real packages
(in seconds, to 3 s.f. with standard (sample) error)}
\label{tab:scale-results2}
\vspace{-1em}
\end{table}

\section{Case Studies}\label{sec:studies}
One difficulty of units analysis compared to some other forms of static analysis is that it requires knowledge of programmer intention and domain in order to provide meaningful annotations. Therefore the process can only be partly automated. With existing bodies of code we rely upon documentation in order to provide information about the units that should be assigned to variables. Unfortunately for most real scientific programs that documentation is often incomplete, if present at all. For samples of Fortran code that are readily available, widely used and mature we took the Weather Research and Forecast Model~\citep{Skamarock:2008:WRF} by the National Center for Atmospheric Research. This project contains over half a million lines of Fortran in about 200 files with numerous add-on sub-projects as well. The code has been developed over many years and is of varying quality. Some files have comments describing the units-of-measure for many variables. We focused on a selection of those files and worked on annotating the variables based on their accompanying comments. We filled in the remaining gaps using an iterative process of inference and consultation of scientific textbooks and papers from which the algorithms had been derived.
\begin{itemize}[leftmargin=1.5em]
\item {\bf aerorate\_so2}: this file is found in the Chem subproject
and it depends upon two other modules. %
We applied 7 units annotations within the file and 11 annotations within the other modules. This process led us to finding an inconsistency in one equation where the program was attempting to convert a value from grams to micro-grams by multiplying by $10^{-6}$ instead of $10^6$.
\item {\bf module\_sf\_oml}: this module is from the core physics code comprising 127 physical lines of code. It is an implementation of an old and lesser-used mixed ocean layer model~\citep{Pollard:1973}. We added 25 annotations to this file for documentation purposes but at least a third of them could have been omitted for inference purposes. As we narrowed down the assignment of units we found a contradiction for some of the helper variables holding intermediate results for wind speed. We have provided feedback to the developers about this inconsistency.
\item {\bf module\_cam\_wv\_saturation}: this file is from the core physics code and has 693 physical lines of code. It has routines that estimate saturation vapour pressure and specific humidity. We added 28 annotations to this file and found an `inconsistent use of units' in several instances of an equation. However, this equation computes a derivative of pressure over temperature for specific ranges of inputs. One term used a change-in-Kelvin while the other used a change-in-Celsius. In this case, these are compatible quantities. In general we would advise users to avoid mixing Kelvin and Celsius because in other contexts they are not immediately interchangeable.
Thus, whilst not a direct bug, our tool indicated undocumented unit
conversion behaviour.

\end{itemize}
 
\section{Related Work and Discussion}\label{sec:related}
\paragraph{Kennedy and \fsharp}\label{sec:fsharp}

The most mainstream implementation of \citet{Kennedy:1996} is found in a
standard extension of the \fsharp\ language~\citep{Kennedy:2009:CEFP}. Units of
measure may be declared or aliased using a special syntax:
\begin{minted}[xleftmargin=0.2em]{fsharp}
[<Measure>] type cm
[<Measure>] type ml = cm^3
\end{minted}
Types and literal values are annotated directly using a syntactic extension:
\begin{minted}[xleftmargin=0.2em]{fsharp}
let cm_to_inch(x:float<cm>) = x / 2.54<cm/inch>
\end{minted}
\noindent
The types are stratified: the \inlFS{float} type cannot be unified
with a units-annotated \inlFS{float<u>} type, only with
\inlFS{float<1>}. This makes transitioning existing code-bases
difficult because adding units annotations to a function means that
you have to annotate all of its call-sites as well. Our approach
requires as little annotation as possible, using global
constraint-solving to fill in the gaps. %

In Kennedy's inference algorithm, nonzero literals are inferred as
unitless unless given an explicit annotation.  Using our notation (and
our representation of abstract units variables), Kennedy's rules for
literals could be written as:
\begin{align*}
  \erule{Kpolyzero}{\ujudge{\Gamma}{0}{\Uabs f l}{\cdot}}{\freshlitid l}
  \qquad
  \erule{Kliteral}{\ujudge{\Gamma}{n}{\unitless}{\cdot}}{n \neq 0}
\end{align*}
That is, $0$ is treated polymorphically (with a fresh abstract unit
variable) and all other unannotated literals are treated as unitless.  However, we
distinguish literals at the top-level of a program/module from those
in a function/subroutine: see \erulenamed{literal} from
Figure~\ref{fig:literalrules}.
Thus, units constraints provided through the literal exception (see
Section~\ref{sec:literals}) can provide a bridge between the units of
other expressions, without requiring all literals to be annotated. The
effect can be seen in the comparison between \fsharp\ and our
extension of CamFort shown in Listing~\ref{lst:fsharpcamfort}.  Fixing
the \fsharp\ example would require the literal value \inlFS{4.0} to be
explicitly marked with either the annotation \inlFS{<kg>} or the wildcard
\inlFS{<_>}.

\begin{listing}[t]
{\raggedright
\begin{minipage}{0.45\linewidth}
\begin{minted}[xleftmargin=0.2em]{fsharp}
// F#
let x = 3.0<kg>
let y = 4.0
let z = x + y
// Error:
// units mismatch with x and y
\end{minted}
\end{minipage}
\begin{minipage}{0.45\linewidth}
\begin{minted}[xleftmargin=0.2em]{Fortran}
! CamFort extension
!= unit kg :: x
real :: x = 3.0, y = 4.0, z
z = x + y
! OK:
! inferred unit kg :: y, z
\end{minted}
\end{minipage}
}
\caption{Annotating a single literal is sufficient for CamFort.}\label{lst:fsharpcamfort}
\end{listing}

Furthermore, \fsharp\ units are annotated directly on variables and
literals, meaning that any tools operating on the \fsharp\ code must
understand the units syntax. The CamFort approach, which we share, has
been to avoid custom code constructs and instead use a comment-based
annotation syntax that can be safely ignored by any tools (and
compilers) that do not know about units. The goal is that CamFort's
units annotations can safely be ignored or erased without affecting
program operation. That design choice leads to certain consequences,
including that literals cannot be annotated directly, and so we do not
support any notion of a custom `units-casting operation' (although
Fortran's casting operation \inl{transfer} could be used this
way). Instead, we allow a literal to take on a variable's explicitly
annotated units if it is the sole expression in an assignment to that
variable (Section~\ref{sec:expressions}).

\paragraph{Phriky Units}
The approach taken by \citet{Ore:2017:ISSTA} avoids the
need for any units-of-measure annotation within C++
programs. This is achieved by first creating files containing sets of
annotations for commonly shared libraries, and then propagating that
units information outward from any points in programs that use those
shared libraries. Because the annotations are associated with shared
libraries, user-programs need not be annotated themselves if they
use the annotated library. The algorithm for checking units
is intentionally unsound for simplicity and performance purposes. It
employs a form of intraprocedural dataflow analysis that approximates
the true set of constraints, sometimes too loosely and sometimes too
tightly. While this can lead to clear problems it also means that the
system is flexible in the presence of C++ code structures that are not
fully analysable by the tool. Instead of giving polymorphic units to
functions, Phriky Units simply skips these functions if there are no monomorphic
units constraints to check.

\paragraph{Osprey}

\citet{Jiang:2006:ICSE} developed Osprey, a scalable units-of-measure
checking system for the C programming language. Like ours, their
system supports suggesting critical variables and units aliasing. But
they only have support for \textit{leaf polymorphism}, which is a
limited form of generic units mechanism that only works within the
function calls at the leaves of the call-tree. In contrast, we can
infer specifications for polymorphic units variables at arbitrary
depths; meaning that polymorphic functions can call polymorphic
functions in our system, important for larger functions that may be
(partly) polymorphic. Additionally, Osprey allows compositional units
analysis only at function boundaries while we also support separate
analysis that can use the modular structure of the program to
partition the work.

\paragraph{CamFort}

The problem of adding units to Fortran programs has more recently been
studied by \citet{Orchard:2015:JCS, Contrastin:2016} as part of the
CamFort project~\citep{CamFort}. Their work was similar to Osprey in
its theoretical capability for analysing units, except that CamFort fully
supported implicit units-polymorphism at all levels of the
call-tree. Their system would try to assume that a
function was units-polymorphic if it had no explicit units
annotations. We found that the CamFort units analysis engine was not
practical for use on real programs due to serious inefficiency and lack of
module support. We have reused and adapted some of the open
source CamFort code for parsing and analysing Fortran code. On top of
that, we have largely rewritten and greatly extended the
units-inference and checking capabilities of CamFort to support
important features that we describe in this paper such as checking and
inferring explicit annotations for units-polymorphism, working across
modules within larger projects and an enhanced and much more efficient
solver based on our modified HNF algorithm (Section~\ref{sec:modhnf}).

\paragraph{N1969}

\citet{Snyder:2013:N1969} officially proposed an ISO standard
extension of Fortran syntax, now rejected, to accommodate the
specification of units-of-measure and type unification supporting the
equational theory of Abelian groups. This included two new statements
for declaring the names of units and a units attribute that could be
provided for variable declarations. The original CamFort units syntax
was based on this proposal~\citep{Orchard:2015:JCS} however it has
since evolved to a comment-based annotation syntax based on the
principle of \emph{harmlessness} to other tools
(Section~\ref{sec:harmlessness}). N1969 did not propose inference and
would require units annotation of every variable, which renders it
impractical, especially for adoption in existing projects.

\begin{listing}[b]
\hspace{-3em}\begin{minipage}{0.45\linewidth}
\begin{minted}[xleftmargin=0.2em]{C}
//@ post(UNITS): @unit(@result) = @unit(x)
int sqr(int x) {
  int y = 0, i;
  for(i=0;i<x;i+=1)
    y += x;
  return y;
}
\end{minted}
\end{minipage}
\begin{minipage}{0.45\linewidth}
\begin{minted}[xleftmargin=0.2em]{C}
//@ post(UNITS): @unit(@result) = @unit(x)
double f(double x) {
  return x + 2.0;
}
// e.g. f(2.54 * 1.0) != 2.54 * f(1.0)
\end{minted}
\end{minipage}
\caption{Unsound examples in CPF[UNITS]: a function \mintinline{C}{sqr} that squares its parameter but not its units (left), and a function \mintinline{C}{f} that breaks dimensional invariance (see Section~\ref{sec:monorestrict}) (right).}
  \label{lst:cpfunits}
\end{listing}

\paragraph{CPF[UNITS]}
\citet{Hills:2008:RULE,Chen:2003:RTA,Rosu:2003:ASE} describe a
comment-based syntax for units-of-measure annotation on the \emph{bc
calculator} and C languages. These rely upon the Maude system of
rewriting logic.  Units can be checked dynamically or statically; we
focus on the latter since it is comparable to our system:

\begin{itemize}[leftmargin=1.5em]
\item Variables can be assigned different units in different program
  branches, unlike a typical type system, therefore static checking
  must consider multiple possible assignments of units to
  variables, risking exponential running time for static verification. Loops
  must be re-analysed until they reach a fixed point, or handled by
  some predefined `code pattern' matching mechanism.
\item It is possible to describe relationships between a function's
  input parameters and its output results, e.g. %
  {\tt @unit(@result) = @unit(parameter)}, %
  and that provides a facility that is similar to
  units-polymorphism for the purpose of specifying function behaviour.
\item There is no units-inference or critical variable suggestion
  facility, only checking.
\item One of the most important assumptions of the CPF[UNITS] is called
  the `locality principle', which says when the programmer uses a
  constant with a simple instruction such as $x+5$ the checker assumes
  that the constant has the same units as the variable. This is
  superficially similar to our literal exception
  (Section~\ref{sec:literals}), but we do not allow any other
  operation than direct assignment of a literal value to a variable,
  and polymorphism is not allowed. The `locality principle' breaks
  soundness in CPF[UNITS] by allowing you to write units-polymorphic
  functions that are not dimensionally invariant (for more details see
  Section~\ref{sec:monorestrict}). In Listing~\ref{lst:cpfunits} we
  show two examples that exploit the problems caused by the locality
  principle. First we define a function \mintinline{C}{sqr} that
  numerically returns the square of its input, but dimensionally does
  not. Second we define a function \mintinline{C}{f} that breaks
  dimensional invariance. These two cases happen because the system
  allows us to treat the literals \mintinline{C}{1} and
  \mintinline{C}{2.0} respectively as having polymorphic units. Since
  constraints are compiled into Maude, it is not clear how this
  problem can be fixed, and CPF[UNITS] is no longer an active project.
\end{itemize}

\paragraph{Squants}
\citet{Squants} have released a framework for Scala that provides
units-of-measure functionality within Scala's expressive type
system. Squants provides a hierarchy of dimensions with associated
units and easy insertion into Scala programs with many conveniences
such as easy conversion and static checking.

However, this system is not capable of solving equations on units as
an Abelian group, therefore it is limited in expressivity. For
example, you cannot multiply together arbitrary quantities with dimensions; only predefined or user-defined combination of units are
supported. It is impossible to define functions that behave like our
example \mintinline{Fortran}{square} in Listing~\ref{lst:square}. The
abstract class \mintinline{Scala}{Quantity} only supports
multiplication by a dimensionless \mintinline{Scala}{Double},
therefore the function \mintinline{Scala}{sqr} in
Listing~\ref{lst:squants} squares the value but merely returns the
same units that it is given.
Still, given this limitation, the authors have defined many convenient
units. For example, if you multiply \mintinline{Scala}{Meters(2)} by
\mintinline{Scala}{Meters(2)} the result is
\mintinline{Scala}{SquareMeters(4)}. And if you multiply
\mintinline{Scala}{SquareMeters(4)} by \mintinline{Scala}{Meters(2)}
then the result is \mintinline{Scala}{CubicMeters(8)}. However if you
multiply \mintinline{Scala}{CubicMeters(8)} by
\mintinline{Scala}{Meters(2)} then the result is an error.

The addition operation of abstract class \mintinline{Scala}{Quantity}
operates on two values of the same class, and it is possible to
polymorphically create a literal having the same units as a given
parameter. This is shown in Listing~\ref{lst:squants} by
function~\mintinline{Scala}{f} that uses the same technique as
Listing~\ref{lst:cpfunits} to break dimensional invariance (see
Section~\ref{sec:monorestrict} for details). Examples executed in the
Scala REPL show the unfortunate results. With Squants we can define a function \mintinline{Scala}{f} for which
\mintinline{Scala}{f(Inches(1)) in Centimeters} produces a different
result than \mintinline{Scala}{f(Inches(1) in Centimeters)}, therefore
dimensional invariance is broken.

\begin{listing}[t]
\begin{minted}[xleftmargin=0.2em]{Scala}
def sqr[A <: Quantity[A]](x: Quantity[A]): Quantity[A] = {x*x.value}
def f[A <: Quantity[A]](x: Quantity[A]): Quantity[A] = {x+x.unit.apply(2)}

scala> sqr(Meters(3))
res0: squants.Quantity[squants.space.Length] = 9.0 m
scala> f(Inches(1)) in Centimeters
res1: squants.space.Length = 7.62001524 cm
scala> f(Inches(1) in Centimeters)
res2: squants.Quantity[squants.space.Length] = 4.54000508 cm
\end{minted}

\caption{Examples in Squants and the Scala REPL: a function \mintinline{Scala}{sqr} that squares its parameter but not its units, and a function \mintinline{Scala}{f} that breaks dimensional invariance (see Section~\ref{sec:monorestrict}).}
  \label{lst:squants}
\end{listing}

\paragraph{Other work integrated with static types}
There are several other libraries that also take advantage of their
language's advanced type system features to provide a
statically-checked units-of-measure specification feature. C++ has
Boost::Units~\citep{Boost:Units}, the package `dimensional' is
available for Haskell~\citep{Dimensional} and
JScience~\citep{JScience} is a scientific programming package for Java
that includes a units-of-measure library. All of these offer
statically-checked units that integrate with the host
language. However that comes with a downside, because they cannot
easily embed the equational theory of an Abelian group, the uses of
units can be somewhat awkward and units-inference is not possible in
the same way that \fsharp\ or our work can manage.

\citet{gundry2015typechecker} provides a units-of-measure system
for Haskell, reusing Haskell's type inference algorithm with a custom
solver via a type-checker plugin, and macros to extend the surface
syntax. Since Gundry's approach reuses type inference, the constraint
generation is similar to that of F\#'s.

A notable past effort includes the Fortress
language~\citep{Allen:2005:Fortress} that was designed to include
units-of-measure from the very start, however work on the language has
been discontinued. Also, the author of JScience attempted to have a
units specification added to the Java language but the proposal was
rejected~\citep{JSR275}.

\paragraph{Semantics}
\citet{atkey2015models} introduce a simple calculus with
polymorphic units for the purposes of modelling the underlying type
theory. They do not present any algorithms for checking or inference,
but instead give a type system specification (via inference rules)
with introduction and elimination rules for universal quantification
over units-of-measure variables.  Their semantics gives a basis for
proving parametricity theorems on units-polymorphic functions.

\paragraph{Run-time checking of units-of-measure}
Several systems and libraries offer tagging of numbers with
units-of-measure `metadata' and dynamically checking the results of
computations to ensure that the units are consistent. These include
several Python modules \citep{Pint,numericalunits,units:Py}, an R
library \citep{units:R}, a Haskell package \citep{quantities}, a
Fortran library \citep{Petty:2001} and many others that work in a
similar manner. These systems take a fundamentally different approach
to ours, by only catching errors after they have happened, while our
work seeks to find inconsistencies before they strike and with zero
run-time overhead. The benefits of run-time checking are an easier
implementation and more flexibility. The costs are overhead in memory
and time, the need to handle the resulting errors at run-time and the
possibility of missing bugs in programs that are incompletely tested.

\subsection{Conclusion and further work}\label{sec:futurework}
We have described a new system for incremental inference and
verification of units-of-measure on modular programs. We have
created an algorithm based on Hermite Normal Form to solve constraint
problems involving polymorphic units, and to produce signatures for
functions and subroutines within modules. We make it possible to
partition the global constraint solving problem into smaller pieces
that are more easily solved, and show that this approach has great
potential to speed-up the analysis. Our method of annotation maintains
the principles of harmlessness and incrementalism by relying solely on
a comment-based syntax and allowing users to write annotations at
their own pace, working with whatever is provided the best we can. We
can also automate part of the work with suggestions and code
synthesis. We believe that our system addresses many of the problems
that have beset past work in software units and dimensional analysis,
and we intend to further refine it into a practical tool for the
working scientist.

During this project,
we have noticed that some scientific formulas, at least for
intermediate values, involve taking logarithms of units
\citep[e.g. ][]{GoffGratch:1946}, or raising units to rational powers,
such as with the Gauckler-Manning coefficient
\citep[p79]{Chanson:2004}. Further work is to find a way to sensibly
encode these formulas without disturbing the rest of the system.

Another area for exploration is array indices:
units-annotated integer variables may be used as array indices but we
do not currently have support for specifying the units of indices.

\begin{acks}                            %
  This work has been supported by
  Grant {EP/M026124/1} from the {Engineering and Physical Sciences Research Council}.%
  Any opinions, findings, and conclusions or recommendations expressed in
  this material are those of the authors and do not necessarily
  reflect the views of the EPSRC.
\end{acks}

\bibliography{references}

\end{document}